\newcommand*{\estimatesB}{\mathrel{\widehat=}}
\newtheorem{definition}{Definition}
\newtheorem*{remark}{Remark}
\newtheorem{theorem}{Theorem}
\newcommand*{\logeq}{\ratio\Leftrightarrow}
\definecolor{mygreen}{RGB}{117,167,117}
\definecolor{myred}{RGB}{255,1,1}
\newcommand{\FUSION}{\widehat{\oplus}}
\newcommand{\FUSIONC}{\widehat{\oplus}_c}
\newcommand{\bitem}{\begin{itemize}}
\newcommand{\eitem}{\end{itemize}}
\newenvironment{boxfig}[3]{%
     \begin{figure}[ht]
     \newcommand{\FigCaption}{#2}
     \newcommand{\FigLabel}{#3}
     
     \begin{shadebox}[adjusted title=\emph{\textbf{#1}}]

\begin{center}
\begin{small}
\begin{minipage}{0.9\linewidth}
\medskip
}{
\smallskip
\end{minipage}
\end{small}
\end{center}
\end{shadebox}
\smallskip
  \caption{\FigCaption}
    \label{\FigLabel}
\end{figure}
}
\newtcolorbox{shadebox}[1][]{
  enhanced, 
  boxrule=1.2pt,arc=1pt,boxsep=-1mm,
  left=2pt,right=2pt,top=5pt,bottom=5pt,width=0.48\textwidth, colbacktitle=black!40!white,enhanced,
attach boxed title to top center={yshift=-3pt},
  colback=white,#1
}
\begin{document}
%
\title{M-STAR: A Modular, Evidence-based Software Trustworthiness Framework}



%
\author{\IEEEauthorblockN{Nikolaos Alexopoulos\IEEEauthorrefmark{1},
Sheikh Mahbub Habib\IEEEauthorrefmark{1},
Steffen Schulz\IEEEauthorrefmark{2}and
Max M{\"u}hlh{\"a}user\IEEEauthorrefmark{1}
\medskip
\IEEEauthorblockA{\IEEEauthorrefmark{1}
Technische Universit{\"a}t Darmstadt, Germany\\
\{alexopoulos, sheikh, max\}@tk.tu-darmstadt.de}}
\smallskip
\IEEEauthorblockA{\IEEEauthorrefmark{2}Intel Labs\\
steffen.schulz@intel.com}}


\maketitle

\begin{abstract}
Despite years of intensive research in the field of software vulnerabilities discovery,
exploits are becoming ever more common.
Consequently, it is more necessary than ever to choose software configurations that
minimize systems' exposure surface to these threats.
In order to support users in assessing
the security risks induced by their software configurations and in making informed
decisions, we introduce M-STAR, a
Modular Software Trustworthiness ARchitecture and framework for
probabilistically assessing the trustworthiness of software systems,
based on evidence, such as their vulnerability history and source code properties.

Integral to M-STAR is a software trustworthiness model, consistent with
the concept of computational trust.
Computational trust models
are rooted in Bayesian probability and Dempster-Shafer
Belief theory,
offering mathematical soundness and expressiveness to our framework.
To evaluate our framework, we instantiate M-STAR for
Debian Linux packages, and investigate real-world deployment scenarios.
In our experiments with real-world data, M-STAR could assess the relative trustworthiness
of complete software configurations with an error of less than 10\%.
Due to its modular design, our proposed framework is agile, as it can incorporate
future advances in the field of code analysis and vulnerability prediction.
Our results point out that M-STAR can be a valuable tool for system administrators,
regular users and developers, helping them assess and manage risks associated with
their software configurations.
\end{abstract}


%
\IEEEpeerreviewmaketitle

%
\section{Introduction}
Modern software systems comprise a multitude of interacting components
developed by different developers.
The security of such systems is of foremost importance, as they are used in various
critical aspects of our everyday lives, such as telecommunications, hospitals,
transportations, etc. The recent (May 2017) ``WannaCry'' exploit~\cite{wiki:wannacry} showed the
scale of disruption even a known and patched vulnerability can incur. This exploit was
made possible because of a vulnerability in Microsoft's implementation of the SMB server
that allowed remote attackers to execute arbitrary code on the victim's machine. The
vulnerability was disclosed by Microsoft in March 2017 (CVE-2017-0144) but the machines
that were left unpatched were the targets of the attack that cost an estimated \$8 Billion
in system downtime, according to a recent report~\cite{hackerone2017report}. The fact that known
and patched vulnerabilities can cause such great disturbance is an indicator that yet unknown vulnerabilities (zero-days),
which can potentially affect billions of devices, are a great danger.
Zero-day exploits are a major attack vector and over 20 thousand new vulnerabilities
were discovered through HackerOne's bug bounty program in 2016 alone, according to the same report, while
the amount of CVEs reported in 2017 were more than double compared to any previous year.

The threat of software exploits is therefore at an all-time high, even though
the security community has come up with various defense mechanisms to locate and fix vulnerabilities,
including \textit{formal verification, static/dynamic analysis of code} and \textit{fuzzing}.
Formal verification of software is a way to achieve very strong security guarantees, effectively
rendering vulnerabilities impossible. However, formal verification incurs high overhead, requires
manual labor, and is meant to be applied on inherently high-risk components, such as the ones implementing
cryptographic protocols. Interestingly, in reality, even cryptographic components such as openssl 
are not formally verified, as they include numerous optimizations.
Consequently, in recent years, the research community has produced
several advances on automated vulnerability discovery.
State of the art static analysis tools, like the ones described in~\cite{brown2017finding, biswas2017venerable, wang2017how}
offer ways to check software for possible vulnerabilities pre-release by pinpointing risky code segments.
Additionally, there has been no lack of progress in the field of dynamic analysis and fuzzing tools, e.g.~\cite{pan2017digtool, schumilo2017kafl, stephens2017driller},
that discover vulnerabilities via testing the runtime behaviour of the program.
Even after applying various proactive pre-release measures, as mentioned above, large end-products most often contain vulnerabilities. This is evident by the high rate of security patching found in almost all big software products. These vulnerabilities can in turn lead to major security exploits.

Hence, it is necessary to quantify associated risks and assess the trustworthiness of software components,
so as, among others, (i) users and system administrators can make decisions regarding which software components to install,
(ii) companies can assess and attest the trustworthiness of employee devices having access to sensitive data and
(iii) developers can make decisions regarding which components to use as dependencies for their software.
Towards this goal, we propose a novel software trustworthiness framework that can process evidence regarding the security
history of isolated software components and complex software systems.
Our framework is designed in a modular fashion in order to adapt to the requirements of the user, can take into
account additional available evidence (e.g. static analysis results), and can readily incorporate future advances
in prediction mechanisms.
Central to our approach are prediction techniques and computational trust
models, together with operators that enhance these models in order to handle
system-wide trustworthiness assessments.
\\
\\\noindent\textbf{Our Contributions:}
The contributions of this paper can be summarized by the following points:
\begin{itemize}
	\item An intuitive and mathematically grounded trust model for software.
	\item A deep learning based vulnerability prediction technique, harnessing
		historical evidence.
	\item A modular architecture for secure software trustworthiness assessment,
		incorporating our model and prediction results.
	\item A detailed analysis of the vulnerability landscape of Debian.
		trustworthiness.
	\item An application of our framework on real-world systems.
\end{itemize}
In this paper, we present a novel modular architecture for assessing software trustworthiness based on the security history of software. We also implement a proof-of-concept system for assessing the trustworthiness of systems consisting of Debian Linux packages. As part of our deep investigation of the Debian vulnerability landscape, we came to the conclusion that the
current practice of vulnerability discovery is like scratching  off the tip of an iceberg; it rises up a little, but we (the security community) are not making any visible progress.
Motivated by this result, we propose an approach to calculate the trustworthiness of a software component (e.g. the Linux kernel, openssl, Firefox, etc.) w.r.t. the vulnerabilities predicted to affect the
specific component in the future. In order to predict vulnerabilities, we consider past security evidence that are mined from publicly accessible vulnerability repositories, such as
the National Vulnerability Database (NVD) and Debian Security Advisories (DSA). We then proceed
to compare different predictive mechanisms for future vulnerabilities w.r.t. their accuracy.
Our deep learning technique, employing LSTMs achieves almost 20\% better performance than
the other proposed heuristic methods.
The resulting predictions are fed to a novel, formal probabilistic trust model that
takes into account the projected (un)certainty of the predictions. Then, we show how to combine the
component-wise trustworthiness assessments to system-wide trust scores, and communicate this scores using
an intuitive trust visualization framework to the user.

The long-term utility of our approach stems from its modular architecture. Our trust model can accommodate
predictions coming from various different prediction techniques as input.
For example, recent advances in applying machine
learning for software security, show great promise and future prediction methodologies can straightforwardly
be incorporated in our system.
\\
\\\noindent\textbf{Paper Organization:}\\
After going through some necessary background knowledge in section~\ref{sec:back}, we present
a high level overview of our system in section~\ref{sec:arch}. Then, we study the security
ecosystem of Debian in section~\ref{sec:vuln}, and proceed with our prediction analysis in
section~\ref{sec:pred}. Next, we present our software trust model in section~\ref{sec:model},
before we apply our framework to real-world system configurations~\ref{sec:apl}. Finally, we go over
the related work in section~\ref{sec:related} and conclude in section~\ref{sec:concl}.

%
\section{Background}\label{sec:back}
In this section we briefly go over some necessary material for the
comprehension of the paper.
\subsection{Bugs, vulnerabilities and exploits}
Real-world security incidents, i.e. \textit{exploits}, are attributed to flaws
in the source code of software products. Generally,
flaws in the source code of a program are referred to as \textit{bugs}. The
subset of bugs that can lead to security exploits are distinguished
as \textit{vulnerabilities}. We do not make a distinction between accidentally
created vulnerabilities (regular bugs) and maliciously placed ones (back-doors).
Exploits can take advantage either of publicly
known, yet unpatched vulnerabilities, or of yet unknown vulnerabilities. The
latter are known as \textit{zero-day vulnerabilities}. Protecting computer
systems against known vulnerabilities comes down to effectively applying
patches to the systems, while protection against zero-day vulnerabilities
is more difficult, relying on the correct choice of software, in order to
minimize the inherent risk.
\subsection{The Debian Linux distribution}
Debian GNU/Linux is a particular distribution of the Linux
operating system, and numerous packages that run on it \cite{Debian}.
There are over $40,000$ software packages available through the Debian
distribution at present and users can choose which of them to install on
their system. All packages included in Debian are open source and free to
redistribute, usually under the terms of the GNU General Public License
\cite{stallman1991gnu}.

Security incidents, i.e. vulnerabilities, are handled in a transparent
manner by the Debian security team \cite{debian-dsa}. The security team
reviews incident notifications for the stable release and after working
on the related patches, publishes a \ac{DSA}.

\subsection{Predictive analytics}
Predicting future events is of paramount importance to trust and risk assessment
methodologies. In our context, the events in question are security incidents
(vulnerabilities) affecting software components, and more specifically for
the case of our study, Debian packages. Consequently, our problem can be
viewed as a time-series prediction problem, where we want to predict the
number of vulnerabilities of a software component in the future by taking
advantage of its vulnerability history, and optionally, some other related
information that is available (e.g. stemming from static analysis).

There exist a variety of forecasting techniques for time-series data. These
can vary from basic simple predictors based on universal observations, e.g.
average or weighted average of the observations, to linear autoregressive
models, e.g. so called ARIMA models. Additionally, recent advances in using
machine learning techniques on various predictive tasks, including in
software security, indicate that supervised learning models, such as
\acp{SVM} often provide good predictions, although they have been mainly
applied in classification problems. \ac{LSTM} neural networks have gained
momentum in the last couple of years in tasks related to the forecasting
of time-dependent processes in various scientific and industrial fields.
\acp{LSTM} are recurrent artificial neural networks whose units are
designed to remember values for long or short periods, making them
especially suitable for the prediction of the next steps
in a time-series. In this paper, we employ \acp{LSTM} and evaluate their
effectiveness in the context of vulnerability prediction based on a software
component's history.
\subsection{Computational trust}\label{sec:comptrust}
Computational trust provides means to support entities make informed decisions
in electronic environments, where decisions are often subject to risk and uncertainty.
Research in computational trust addresses the formal modeling, assessment, and
management of the social notion of trust for use in risky electronic environments.
Examples of such risky environments span from social networks to
cloud computing service ecosystems~\cite{Habib2015BookChap}. In theory, trust is usually reasoned in terms of
the relationship within a specific context between a \emph{trustor} and a
\emph{trustee}, where the trustor is a subject that trusts a target entity, which is
referred to as trustee. Mathematically, trust is an estimate by the trustor of the
inherent quality of the trustee, i.e., the quality of another party to act
beneficially or at least non-detrimentally to the relying party. This estimate is
based on evidence about the trustee's behaviour in the past, in the case of
M-STAR past vulnerabilities and characteristics of software.
%

CertainTrust~\cite{Ries2009Extending} (and its accompanying algebra of operators,
CertainLogic~\cite{ries2011certainlogic}) or Subjective Logic~\cite{josang2001logic} provide
mathematically grounded models (consistent with Bayesian statistics) to represent and compute trustworthiness under uncertain probabilities.
In this paper, we use CertainTrust and CertainLogic for trust representation and computation
respectively, although Subjective Logic can also be used. In CertainTrust, trustworthiness is represented
by a  construct called opinion ($o$). Opinions express the truth of a statement or proposition.
The opinion is a tuple, i.e. $o= (t,c,f)$. Here, the \emph{trust value} $t$ indicates the most likely
value for the estimated parameter (in Bayesian terms, the mode of the posterior distribution).
It can depend on the relative frequency of observations or pieces of evidence supporting the truth of a proposition.
The \emph{certainty value} $c$ indicates the degree to which the $t$ is assumed to be representative
for the future (associated to the credible interval). The higher the certainty ($c$) of an opinion is,
the higher the influence of the trust value
($t$), on the expectation value $E$ (trustworthiness score), in relation to the initial expectation value
($f$). The parameter $f$ expresses the assumption about the truth of a proposition in absence of evidence (prior distribution).
The expectation value (trustworthiness score), $E$, is calculated as follows:
$E= t \cdot c + (1-c) \cdot f$. $E$ expresses an estimation of trustworthiness considering $t$, $c$, and $f$.

CertainLogic, based on the CertainTrust model, provides mathematical operators to aggregate multiple
opinions (supporting the truth of propositions) considering uncertainty and conflict. It offers a
set of standard operators like $AND^{ct}$ ($\wedge^{ct}$), $OR^{ct}$ ($\vee^{ct}$),
and $NOT^{ct}$ ($\neg^{ct}$) as well as non-standard operators like Consensus ($\oplus$), Discounting ($\otimes$),
and Fusion ($\FUSION$). The standard operators are defined to combine opinions associated with
propositions that are independent. The non-standard operators are defined to combine opinions associated
with propositions that can also be dependent.
\section{System Architecture}\label{sec:arch}
Our main goal is to develop a modular architecture that
is intuitive and easily extensible. Thus, we identify five
core components of our system, as shown in Figure \ref{fig:arch}.
We now proceed to the specification of the components in a
high level and in the next sections we elaborate on the approaches and solutions used in this paper.\\
\begin{figure}[h]
	\centering
	\caption{System Architecture}
	\label{fig:arch}
	\includegraphics[scale=0.4]{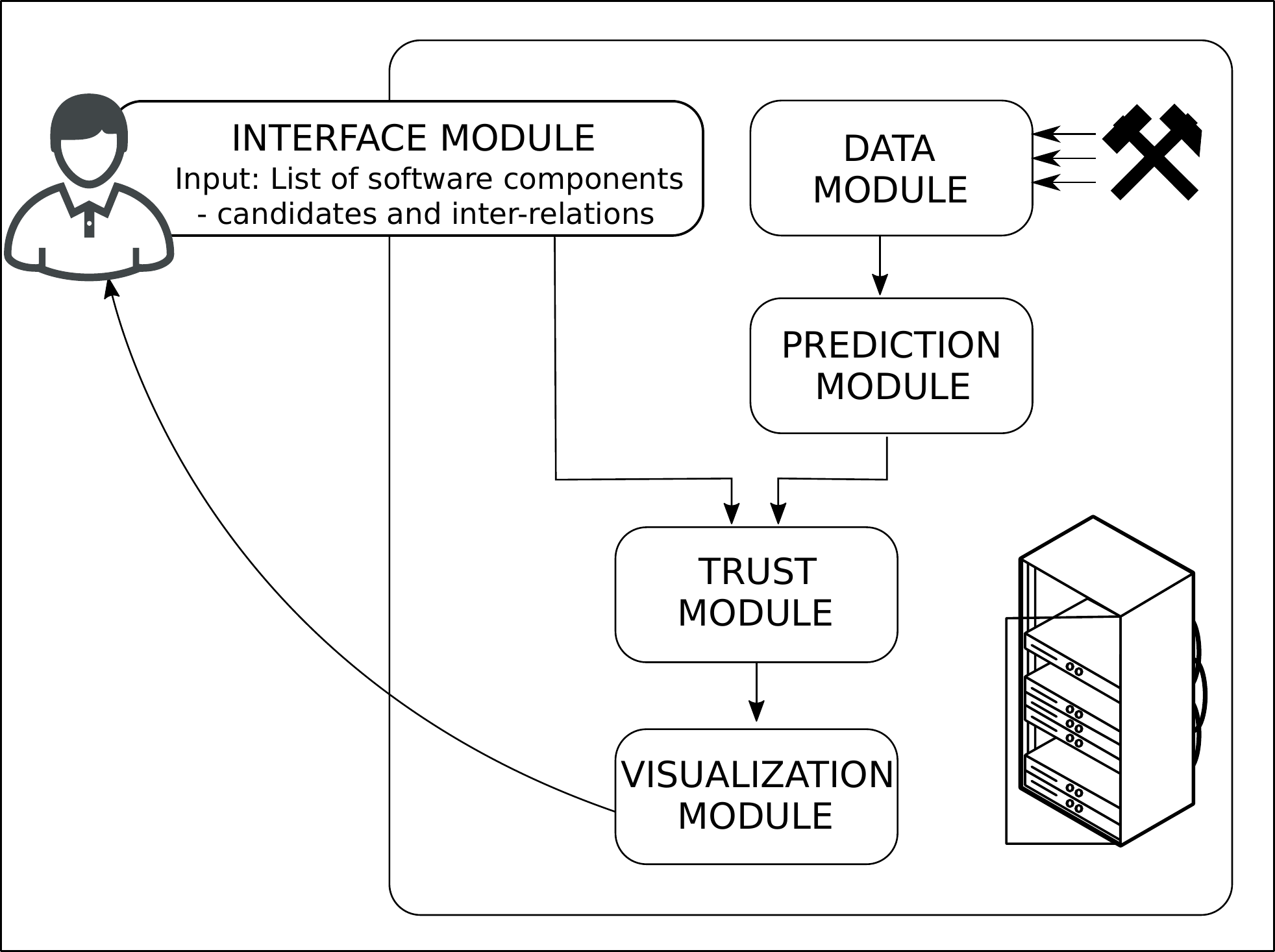}
\end{figure}
\\\smallskip
\noindent\textbf{Interface Module:} \\
The Interface module provides information about which components comprise
the system under evaluation, along with information about their inter-dependencies.
As part of the Interface module, an attestation protocol (e.g.~\cite{sailer2004design}) can be implemented
to guarantee that the data provided as input has not been tampered with.\\
\\\smallskip
\noindent\textbf{Data Module:} \\
The Data module consists of the evidence-gathering mechanisms
employed by the framework implementation. In our instantiation of M-STAR, we mine the Debian Security
Advisories and NVD's CVE reports for past vulnerabilities of the
software components.
\\
\\\smallskip
\noindent\textbf{Prediction Module:} \\
The Prediction module includes the data analysis mechanism used
to predict future vulnerabilities of the components. This
mechanism can range from simple averaging models to complex
machine learning approaches.
\\
\\\smallskip
\noindent\textbf{Trust Module:} \\
The Trust module is responsible for modeling and calculating the trustworthiness
of each component. The module also combines individual assessments
to system-wide trustworthiness values, depending on the configuration of the
target system.
\\
\\\smallskip
\noindent\textbf{Visualization Module:} \\
The Visualization module serves the important purpose of
communicating the resulting trustworthiness scores to the user or system administrator. The module provides intuitive and comprehensible graphical as well as numerical interfaces in order to assist their decision making.
\\

\section{Vulnerabilities in large software projects}\label{sec:vuln}
Our framework requires good-quality (i.e. correct and complete) sources of data regarding past vulnerabilities
of software components. Therefore, a well-organized and maintained security
report repository is required. Such vulnerability repositories are
maintained by entities such as large companies (e.g. big software vendors like Microsoft, or anti-virus companies),
intelligence agencies,
or big open-source projects. For our deployment, we choose the Debian distribution
of GNU/Linux as the focus of our efforts, based on the comprehensive variety of software offered as part
of it, and its transparent, open, and security-focused order of operation.
\subsection{Vulnerabilities in Debian}In this section, we present an overview of the Debian
ecosystem w.r.t. its security characteristics and draw some interesting conclusions.
We collected our data (implementation of the Data Module) from the \acp{DSA}
and NIST's NVD\footnote{https://nvd.nist.gov/}.
\begin{figure*}[h]
\centering
\includegraphics{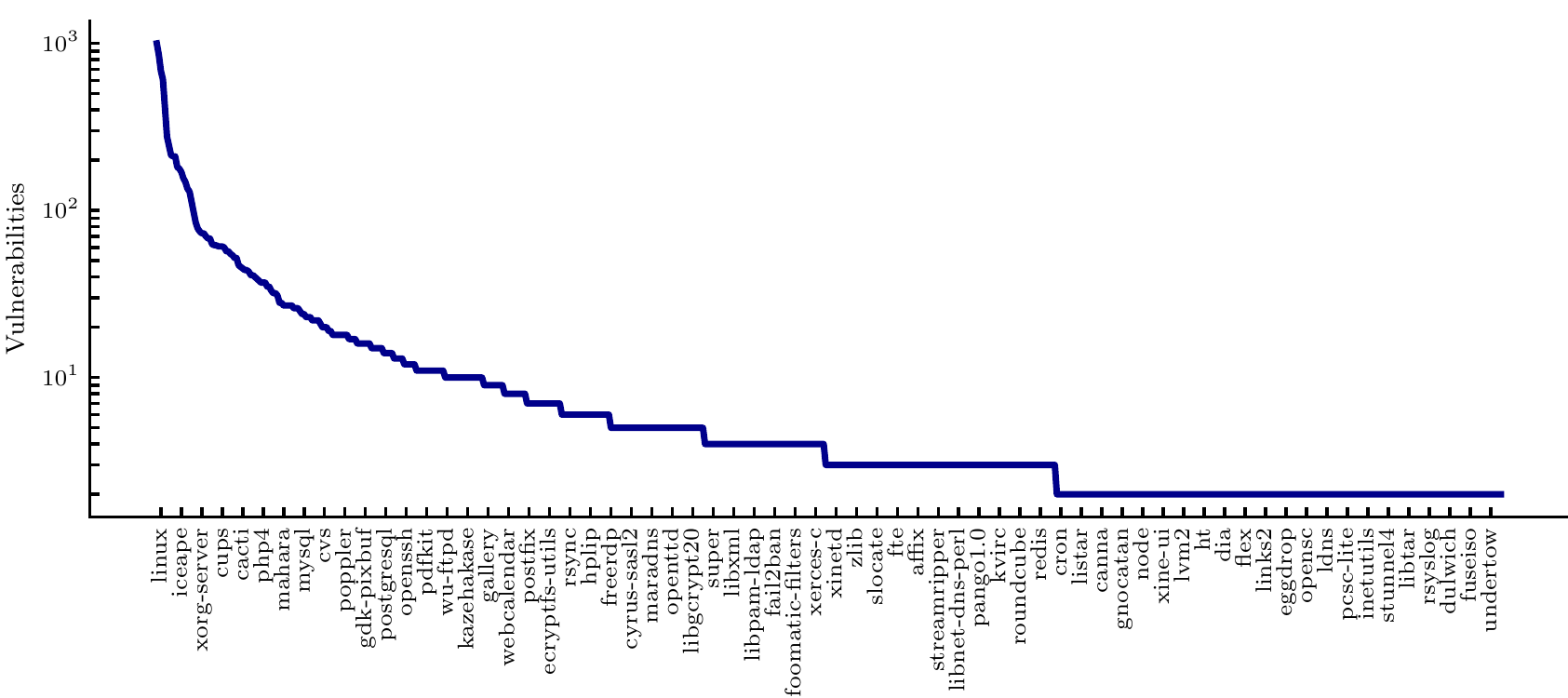}
\caption{The distribution of vulnerabilities in the Debian ecosystem (years 2001-2016). The scale of axis y is logarithmic. Only packages with at least two vulnerabilities are taken into account. Every tenth package name appears on the x axis for space reasons.}
\label{fig:distr}
\end{figure*}
In Fig. \ref{fig:distr}, we see the distribution of vulnerabilities among source packages of Debian for the years 2001-2016.
In the figure, packages that had at least two vulnerabilities in the specified time frame are included, yielding a total of 619 source packages.
An additional 540 source packages had a single vulnerability and were not included in the figure for readability reasons (the complete figure is
available in the appendix).
It is interesting to notice that the distribution is characteristically heavy-tailed (notice that the y axis is logarithmic)
with a few source packages dominating the total vulnerabilities reported and a long tail of a large number of packages with
only a few vulnerabilities. Interestingly, inspecting the plot (Fig.~\ref{fig:loglog}) of the data in (double) logarithmic axes, we
can observe a straight line, indicative of a power-law distribution.
In short, we observe that the majority of vulnerabilities is concentrated in a small set of source packages,
and therefore the trustworthiness of software systems largely depends on which of those high-risk packages those systems use.

Table \ref{tab:most} presents the 20 top vulnerable Debian source packages of all time. An automated procedure was established
to collect the vulnerabilities reported for previous versions of a package and attribute them to the current
version of the package in the stable distribution. Manual checks and small corrections were subsequently performed.
The Linux kernel, as probably expected, is the most vulnerable component,
followed by the two main browsers in use (Firefox\footnote{Firefox ESR (Extended Support Release) is the version of Mozilla Firefox packaged
in Debian} and Chromium\footnote{Chromium consists of the open-source code-base of the
proprietary Google Chrome browser (https://www.chromium.org/)}). The total number of vulnerabilities reported in the 16 year period was
10\,747, meaning the Linux kernel accounts for more than 10 percent of the total. During the previous two years (2015-2016), a total of
2\,339 vulnerabilities were reported, with Chromium being by far the most affected package, accounting for 297 vulnerabilities compared to
the next most vulnerable package (Firefox) which was affected by 153. The Linux kernel, in that time period was affected by 144 vulnerabilities,
which is roughly 6 percent of the total.
\begin{table}[htbp]
\caption{The top twenty packages with the most vulnerabilities in time periods (i) 2001-2016 and (ii) 2015-2016, along with the respective ranks of the source packages w.r.t. their vulnerabilities.}
	\centering
\begin{tabular}{|l|c|c|c|c|}
\hline
Source package name & \# total & rank total & \# 15-16 & rank 15-16 \\ \hline \hline
linux & 1303 & 1 & 144 & 3 \\ \hline
firefox-esr & 815 & 2 & 153 & 2 \\ \hline
chromium-browser & 496 & 3 & 297 & 1 \\ \hline
openjdk-8 & 353 & 4 & 121 & 4\\ \hline
icedove & 347 & 5 & 89 & 5\\ \hline
wireshark & 261 & 6 & 87 & 6\\ \hline
php7.0 & 258 & 7 & 86 & 7\\ \hline
mysql-transitional & 221 & 8 & 63 & 10\\ \hline
xulrunner & 211 & 9 & -- & --\\ \hline
iceape & 178 & 10 & -- & --\\ \hline
openssl & 145 & 11 & 50 & 13\\ \hline
qemu & 134 & 12 & 70 & 8\\\hline
xen & 113 & 13 & 52 & 12 \\ \hline
wordpress & 110 & 14 & 38 & 15 \\ \hline
tomcat8 & 99 & 15 & 48 & 14\\ \hline
imagemagick & 95 & 16 & 57 & 11 \\ \hline
krb5 & 89 & 17 & 10 & 39 \\\hline
typo3-src & 77 & 18 & 1 & 151-253 \\ \hline
ruby2.3 & 75 & 19 & 5 & 56-69\\ \hline
postgresql-9.6 & 75 & 20 & 19 & 22\\ \hline
\end{tabular}
\label{tab:most}
\end{table}
\begin{figure}[h]
\centering
\includegraphics{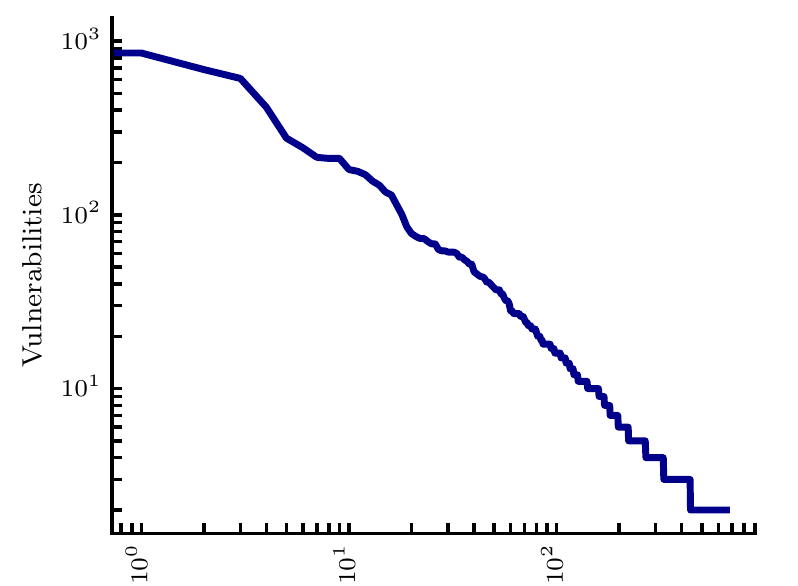}
\caption{A log-log plot of the distribution of Fig. \ref{fig:distr}.}
\label{fig:loglog}
\end{figure}

Concerning the total number of vulnerabilities reported in the Debian ecosystem w.r.t. time, Fig. \ref{fig:year}
shows a clear upward trend as the years go by.
\begin{figure}[h]
\centering
\includegraphics{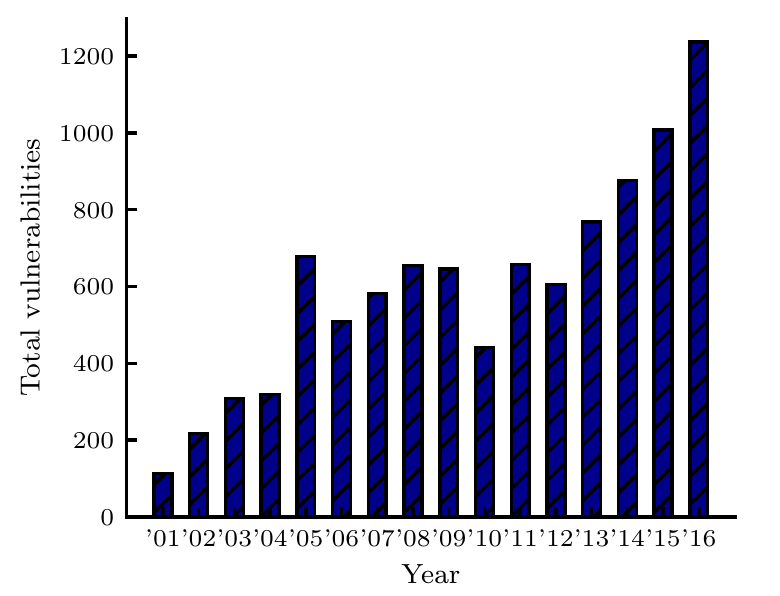}
\caption{Total vulnerabilities reported in Debian per calendar year (2001-2016).}
\label{fig:year}
\end{figure}
Can this mean that the security quality of the software is decreasing, despite
the considerable effort of security researchers and professionals? One could argue, that the amount of software packages of Debian
increased dramatically in recent years and this is the cause of the increase in the total amount of vulnerabilities
reported. The line of thought would be, that with such a large number of packages, even one or two bugs that slipped
the security measures of the individual maintainers, would contribute to a big yearly sum.
That would be a reasonable assumption, as the stable version of Debian released in 2002 (Woody) contained only 8\,500 binary packages,
going up to 18\,000 packages with the release of Etch in 2007, significantly increasing to 36\,000 in 2013 (Wheezy) and peaking at 52\,000 with the current stable version
(Stretch) released in June 2017. However, we found evidence supporting the opposite. Interestingly, the number of vulnerabilities per package (among the packages
that had a vulnerability reported for the specified year) also follows an upward trend, a fact straightforwardly deduced from Fig.~\ref{fig:average_per_year}.
\begin{figure}[h]
\centering
\includegraphics{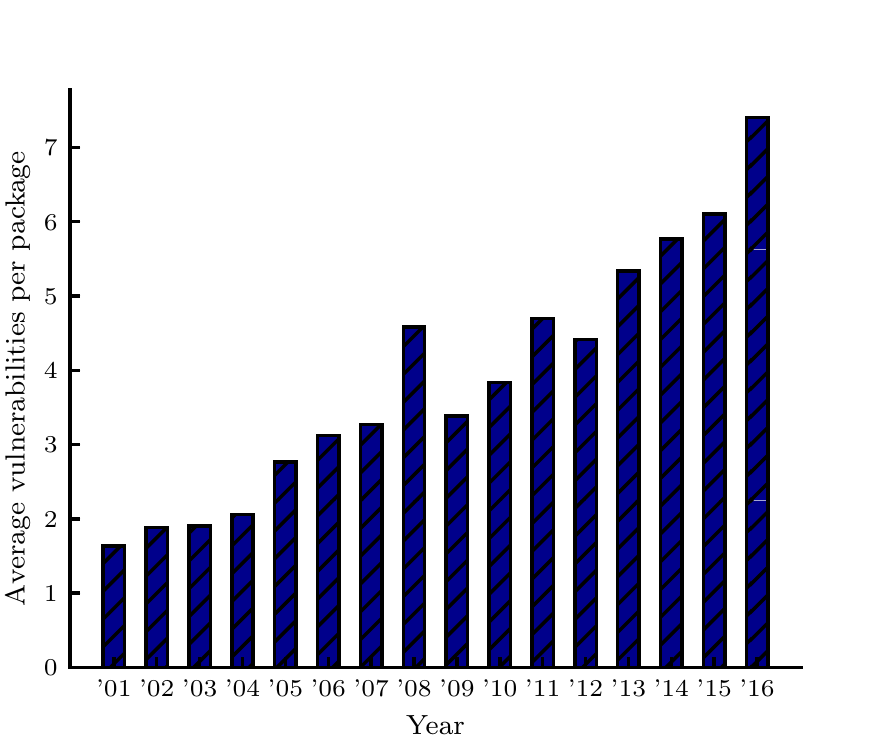}
\caption{Average vulnerabilities per package (that had at least one security incident in that year) in Debian per calendar year (2001-2016).}
\label{fig:average_per_year}
\end{figure}
In the latter figure we can even see a smoother, clearer upward trend compared to Fig.~\ref{fig:year}, although the slope of the trend is nearly identical.
These observations, together with our previous assessment that the distribution of vulnerabilities among the packages of Debian can be
attributed to a power law generation mechanism, indicate that there are specific packages that continue to have large numbers of vulnerabilities
for prolonged periods of time. Is there an explanation for this phenomenon or are we (security researchers) doing such a bad job? One possible
glimmer of hope would be if vulnerabilities were induced by software upgrades and the number of vulnerabilities affecting a specific version
of a package gradually dropped to zero.
An intuitive hypothesis would be that at least for certain stable versions of a package, the rate of
vulnerabilities will eventually decrease. In order to test the claim that specific
versions of a package reach a relatively secure state (few vulnerabilities reported per quarter) and that subsequent vulnerabilities that are
attributed to the package are caused by updates, we perform a case study on two popular packages, namely PHP and OpenJDK, which recently underwent
major version changes.
The hypothesis is that each major version of a package becomes more secure as time passes, as a result of the hard work of the security community
and that a considerable amount of new vulnerabilities affect only the new code inserted with the major updates. To test this hypothesis we inspected the vulnerabilities
reported for the newer versions of the packages and checked if
they also affect older versions.
\medskip
\par\noindent\textbf{PHP:}\smallskip\\
PHP is a popular server-side scripting language that is used by 83\% of all websites whose server-side programming language is known,
according to W3Techs \footnote{https://w3techs.com/technologies/details/pl-php/all/all} (measured in November 2017).
Several PHP versions have been packaged in Debian, traditionally following the source package name phpx, where x is the version number.
We will look into the transition from php5 to the next version php7 \footnote{Official package name php7.0} (php6 never made it to the public). The vulnerability
history of php5 indicates that the component is relatively hardened at the time the next version is released.
The vulnerability
discovery rate is relatively low and stable for the last months before the launch of php7.
\begin{figure}[h]
\centering
\includegraphics{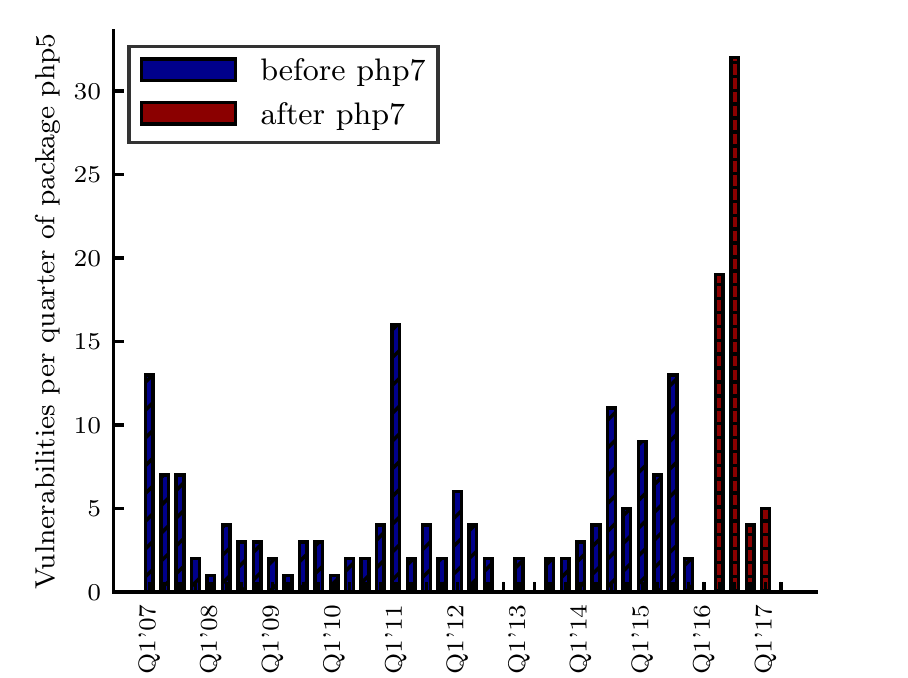}
\caption{Vulnerabilities of package php5, during its presence in the stable release, before and after the introduction of the next version (php7) in testing.
	Vulnerability rate: (a) before the launch of the new version: $\approx 4$ vulnerabilities/quarter;
(b) after the launch of the new version: $\approx 10$ vulnerabilities/quarter. }
\label{fig:php}
\end{figure}
According to the hypothesis, we expect
a good amount of vulnerabilities after this point to affect the new version (php7) of the software and not older versions (php5.x).
On the contrary, we observe that indeed we have a substantial hike in vulnerabilities reported due to the launch of the new
version, however, most of those vulnerabilities were there from the previous version. The launch of the new version
may have instigated researchers and bug hunters to look for vulnerabilities induced by the new code, but instead what they found
were already existing vulnerabilities from previous versions. After manual inspection of all security incidents tracked by the Debian
Security Bug Tracker\footnote{https://security-tracker.debian.org/tracker/}, which also tracks vulnerabilities of packages in the testing
distribution, we found that in the time window of January~2016 - November~2017, out of the 93 vulnerabilities that affected php7, 78 of them
(84\%) also affected version 5 of the software (4 of the 78 did not affect version 5.4 and only affected version 5.6).
\medskip
\par\noindent\textbf{OpenJDK:}\smallskip\\
OpenJDK is an open source implementation of the Java Platform (Standard Edition), and
since version 7, the official reference implementation of Java.
We repeat the experiment performed with PHP, with OpenJDK versions 7 and 8.
Version 7 was introduced into the testing distribution of Debian in September 2011 and became part of the stable
distribution in May 2013 (Debian Wheezy). It remained part of the stable until the release of stretch (June 2017).
The next version, OpenJDK-8, became part of the testing distribution in May 2015 and became part of stable with
Debian stretch (June 2017), replacing version 7.
\begin{figure}[h]
\centering
\includegraphics{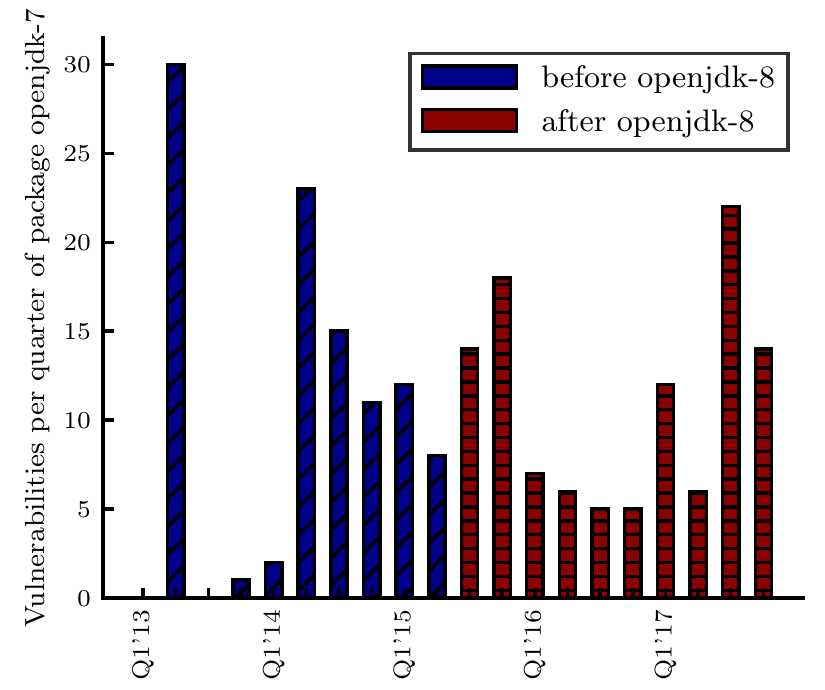}
\caption{Vulnerabilities of package openjdk-7, during its presence in the stable release, before and after the introduction of the next version (openjdk-8) in testing.
Vulnerability rate: (a) before the launch of the new version: $\approx 11.3$ vulnerabilities/quarter;
(b) after the launch of the new version: $\approx 10.6$ vulnerabilities/quarter. }
\label{fig:openjdk}
\end{figure}
In Figure~\ref{fig:openjdk}, we see the vulnerabilities of version 7 before and after the introduction of the next version. Again,
we see no significant decline in the rate of vulnerability reports, and the introduction of the
next version seems to contribute to the discovery of vulnerabilities of the previous version. To put things into perspective,
out of a total of 38 vulnerabilities that were reported for openjdk-8 in the time span of June-November 2017, only 2 did not affect
version 7 (although these are not depicted in Figure \ref{fig:openjdk} because version 7 was removed from stable in the meantime),
and most of them (31/38) also affected version 6, which was introduced in the testing distribution in 2008.
\medskip
\par\noindent\textbf{Debian Wheezy:}\smallskip\\
Although, the detailed investigation of vulnerabilities for PHP and OpenJDK gave us
some useful insights about the current state of software quality, these results
cannot be generalized to other packages. In order to get a more complete view of
the effect of new vulnerabilities on older versions, we study the security history
of a single stable release of Debian, including its Long-Term Support (LTS) phase\footnote{
	Starting from 2014, Debian includes an LTS program, in order to extend support
for any release to at least 5 years in total.}.  For this study, we chose Debian 7
(Wheezy) that was release in May 2013 and at the moment of writing is still supported
from the LTS team (planned support until May 2018). In Figure~\ref{fig:wheezy}, we see the
distribution of vulnerabilities per quarter, starting from the release of Wheezy, until
the time of writing (November 2017).
\begin{figure}[h]
\centering
\includegraphics{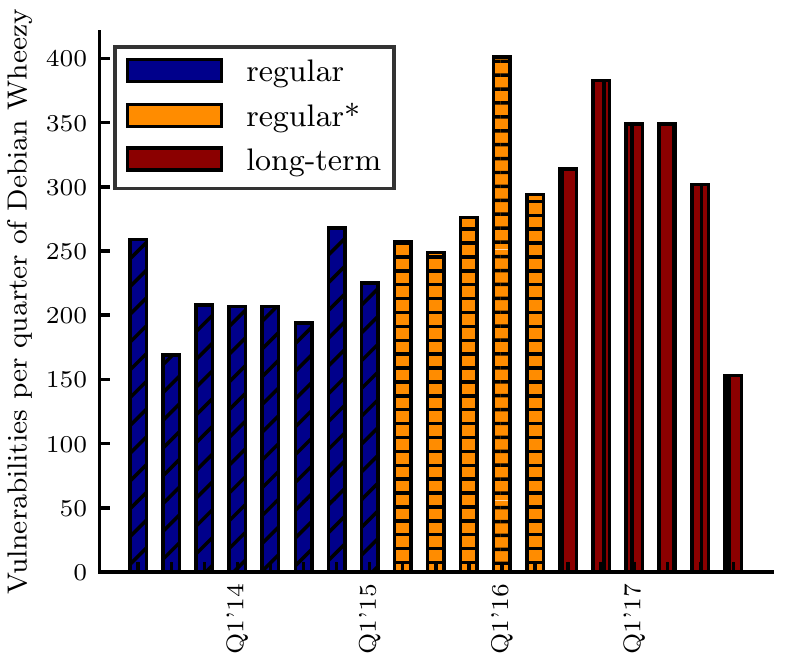}
\caption{Vulnerabilities that affected packages of the Wheezy Debian release. *During the
	time window from Q2/2015 to Q2/2016, both Debian 7 and 8 were supported by the
regular security team, and therefore, the amounts presented in the figure are a higher
bound, as some vulnerabilities may have affected only the newer release.}
\label{fig:wheezy}
\end{figure}
Even for a specific stable release of Debian, we can observe a clear upward trend that
continues in the LTS phase of the software. These results support our findings for
individual packages and show that the rate of vulnerabilities is not decreasing, and to
the contrary is slightly increasing over time, even for a specific stable release.
\medskip
\par\noindent\textbf{Conclusions:}\smallskip\\
In this section, we investigated the distribution of vulnerabilities in the Debian
ecosystem. After detailed examination of the vulnerabilities reported for both individual
widely used packages (PHP and OpenJDK), and a specific stable release of Debian, we
conclude that the number of vulnerabilities does not decrease over time, even for
software that has been stable for many years. To the contrary, we discerned a relatively
stable rate of vulnerabilities, that shows signs of increase over time.
In other words, we are still in the phase of \emph{the more we look - the more we find}.
Maybe our vulnerability-finding efforts are like cutting off the tip of an iceberg;
it rises up a little, but we are not making any visible progress.
Although automated security tools and manual security inspection are becoming more
widespread and effective, we have not reached the point of curbing the vulnerability
rate. Our results draw interesting comparisons to
studies performed over a decade ago. Rescorla claimed in \cite{rescorla2005finding}
that there was no clear evidence that finding vulnerabilities made software more
secure, and that even the opposite may be true, i.e. that finding vulnerabilities,
given that their rate is not decreasing, leads to more risk than good, by
allowing hackers to attack unpatched systems. Another study from 2006
by Ozment and Schechter~\cite{ozment2006milk} found evidence of a decrease in the
vulnerability rate of the base OpenBSD system in a 7.5 year interval. Our results
show that, more than a decade later, this is not generally true for Debian as a whole
and for PHP and OpenJDK individually. After the impressive growth of the security community
since 2006, we still do not have strong evidence that the security quality of software
is increasing.
\begin{remark}
	The Debian Security Team publishes DSAs for important vulnerabilities that command immediate patches of the packages. These are a subset
	of the vulnerabilities that have an associated CVE. Therefore, the numbers presented in the above section, can generally be seen as a
	lower bound. Furthermore, the security team does not differentiate between the
	vulnerabilities that were deemed important enough to command a DSA, e.g. by
	using a CVSS severity score~\cite{mell2006common}.
	In our analysis, we also did not take into account the severity score of
	vulnerabilities as reported in CVEs, preferring to follow the judgement of
	the Debian Security Team on which vulnerabilities need an urgent fix.
	However, using severity data for vulnerabilities may allow us to draw
	other interesting conclusions in the future.
\end{remark}

%
\section{Predicting future vulnerabilities}\label{sec:pred}
Our aim is to predict future zero-day vulnerabilities of software components (in our instantiation Debian packages),
based on their vulnerability history, mined from public and open
vulnerability databases. In this section, we present a formalization of the
problem via an abstract functionality, along with the realization of this functionality by
three different prediction techniques. We then proceed to compare those techniques on
real-world data and discuss their performance.
\begin{boxfig}{The Prediction functionality}{The abstract prediction functionality.}{fig:abstractPred}
	\par \textbf{-- Input: }
	\begin{enumerate}[i.]
		\item A list of time-series $TS = \{ts_{1},\cdots,ts_{n}\}$,
			where $n$ is the total number of software components. For each
			time-series: $ts_{i, \: i \in \{1,\cdots,n\}}  =  \langle v_{(i,1)},\cdots,v_{(i,m)} \rangle$, 
			where $v_{(i,j)}$ is the number of vulnerabilities of component $i$ for the time
			window $t_j$.
		\item A list of validation samples $V = \{val_{1},\cdots,val_n\}$, where
			$val_{i, \: i \in \{1,\cdots,n\}}  =  \langle v_{(i,m+1)},\cdots,v_{(i,m+k)} \rangle$.
		\item An expected prediction window $l$.
	\end{enumerate}
			\medskip
	\par \textbf{-- Output: }
	\begin{enumerate}[i.]
		\item A list of predictions $Pred = \{pred_1,\cdots,pred_n\}$,
			\medskip \\
			where $pred_{i, \: i \in \{1,\cdots,n\}} = \widehat{\sum\limits_{j=1}^{l}v_{m+k+j}}$.
			\smallskip \\
		\item A list of error estimates $Errors = \{e_1,\cdots,e_n\}$, where
			each $e_{i, \: i \in \{1,\cdots,n\}} = \widehat{error(pred_i)}$, where $error()$ is an error metric, e.g. absolute distance, normalized root mean squared error (nrmse) etc.

	\end{enumerate}
\end{boxfig} 
\subsection{Problem statement and experimental setting}
The abstract formulation of our problem can be seen in Fig. \ref{fig:abstractPred}.
On input, a list of time series $TS$, containing the vulnerability histories of
the software components under study for $m$ time windows, the validation size (in time steps)
$k$, and the expected prediction window into the future $l$, the functionality outputs a
list of predictions $Pred$ and a list of corresponding \emph{future} error estimates $Errors$. Our goal is
to predict the number of vulnerabilities in well-defined future intervals, e.g. the
expected number of vulnerabilities of a software component in the next 9 months.
Our dataset consists of vulnerability data of Debian packages, as already mentioned.
Vulnerabilities are grouped in a per-month basis, with all data of the years 2001-2016
as input, and data for the year 2017 as output. In this section, we study and compare
3 prediction techniques, namely (i) a simple average function on the history of the package,
(ii) a more advanced exponentially weighted average giving higher weights to the most recent
history of the packages, and finally (iii) an \ac{LSTM} neural network, which represents the
state of the art in machine learning techniques for on time-related data.

\smallskip
\par\noindent\textbf{Average: }The simplest prediction technique we implemented is the average
function on the vulnerability history of the packages. The intuition behind this approach is
that software components have some characteristics that define their security behaviour and
these characteristics are generally stable. Additionally, our observations in the detailed
analysis performed in the previous section, reinforce this intuition.

\smallskip
\par\noindent\textbf{Weighted average: }The second method we implemented is an exponentially
weighted average function. The weighting allows us to take into account changes to the behaviour
of the package that happen over an extended period of time.

\smallskip
\par\noindent\textbf{LSTM: }By using \acp{LSTM} we aspire to capture changes to the behaviour of
software components with fine granularity. We train a separate \ac{LSTM} model for each package
by using the packages vulnerability history from Q1/2001-Q1/2016 as the training set, the data  from
Q2-Q4/2016 as the validation set which also produces the future error estimate, while we make predictions
for time windows in 2017 (Q1-Q3), as seen in Figure~\ref{fig:timeline}. We implemented our prediction module in around 400 lines of python code,
by using the Keras\footnote{https://keras.io/} open source neural network library.
The LSTM that provided the predictions that follow is a stateful one,
consisting of 10 neurons. We divided the vulnerabilities in monthly intervals and applied a rolling average
function, as a pre-processing step of the input time series. We also configured
the network to look 3 steps (months) in the past, in order to generate a prediction. We trained 5 models
for each package and selected the one that yielded the minimum validation interval error, for our
final predictions.
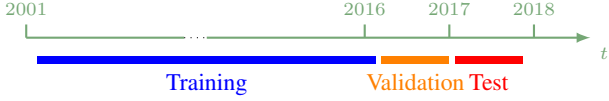
\begin{figure}
\begin{tikzpicture}[x=1.5cm]
\draw[mygreen,thick]
  (0,0) -- (1.40,0);
\draw[dotted]
  (1.40,0) -- (1.60,0);
\draw[mygreen,->,thick,>=latex]
  (1.60,0) -- (5,0) node[below right] {$\scriptstyle t$};
  \draw[mygreen,thick]
  (0,0) -- ++ (0,5pt) node[above] {$\scriptstyle 2001$};
  \draw[mygreen,thick]
  (3,0) -- ++ (0,5pt) node[above] {$\scriptstyle 2016$};
  \draw[mygreen,thick]
  (3.75,0) -- ++ (0,5pt) node[above] {$\scriptstyle 2017$};
  \draw[mygreen,thick]
  (4.50,0) -- ++ (0,5pt) node[above] {$\scriptstyle 2018$};

  \fill[blue] (0.1,-0.25) rectangle node[below] {\strut\small Training} (3.1,-0.35);
  \fill[orange] (3.15,-0.25) rectangle node[below] {\strut\small Validation} (3.75,-0.35);
  \fill[myred] (3.80,-0.25) rectangle node[below] {\strut\small Test} (4.40,-0.35);

\end{tikzpicture}
	\centering
	\caption{Experiment timeline partition}
	\label{fig:timeline}
\end{figure}

\subsection{Prediction Results}
In Table~\ref{tab:lstm}, we see the prediction results
of the LSTM method for the 10 most vulnerable packages of all time that are part of the Debian \emph{stretch} (stable) distribution.
To judge the prediction accuracy of the neural network, we compare the root mean squared error (rmse) of the method over the top
136 packages (packages that have more than 10 vulnerabilities reported), with the other proposed methods, namely the average function and the weighted average function.
As we can see in Table~\ref{tab:comparison},
the LSTM achieves an rmse of $14.66$, whereas the next most accurate method, the weighted average one, achieves an error of $18.07$. This translates to $19\%$ better accuracy
for the neural network implementation, or inversely $23\%$ larger error for the weighted average technique, which we consider substantial.
Additionally, as the optimization of the neural network was considered out of the scope of this paper, we believe that
there is still potential for significantly improved accuracy. However, we also believe that there is a natural bound to the
accuracy of the predictions we can generate, as the vulnerability discovery process has some inherent unpredictability.
The required error estimate for a component (package) $i$ (see Figure~\ref{fig:abstractPred}) is calculated as the normalized error of the prediction for the
\emph{validation} period, i.e. the last 9 months of 2016 in this case, as follows:
\begin{equation}
	error(pred_i) = \left\{
\begin{array}{ll}
      \frac{|\widehat{v_{i,m+k}} - v_{i,m+k}|}{v_{i,max}-v_{i,min}} & ,v_{i,max}-v_{i,min} > 1 \\
      \\
      |\widehat{v_{i,m+k}} - v_{i,m+k}| & ,else \\
\end{array} 
\right. 
\label{eq:cert}
\end{equation}
%
%
\begin{table}[htbp]
	\centering
\caption{Prediction results of LSTM model - prediction window is 9 months (January-September 2017)}
\begin{tabular}{|l|c|c|c|c|}
\hline
Package & Rank & Predicted & Actual & Future Error exp.\\
\hline \hline
linux & 1 & 35 & 70 & 0.034\\ \hline
firefox-esr & 2 & 54 & 84 & 0.08 \\ \hline
chromium-browser & 3 & 80 & 74 & 0.293 \\ \hline
openjdk-8 & 4 & 14 & 42 & 0.196 \\ \hline
icedove & 5 & 24 & 58 & 0.408 \\ \hline
wireshark & 6 & 12 & 11 & 0.136 \\ \hline
php7.0 & 7 & 14 & 5 & 0.821 \\ \hline
mysql-transitional & 8 & 11 & 32 & 0.136 \\ \hline 
openssl & 9 & 3 & 3 & 0.491 \\ \hline
qemu & 10 & 6 & 11 & 0.674 \\ \hline
\end{tabular}
\label{tab:lstm}
\end{table}
\begin{table}[htbp]
	\centering
\caption{Root mean squared error of different prediction techniques on the top 136 vulnerable packages of Debian}
\begin{tabular}{|l|c|c|}
\hline
Technique & rmse & error w.r.t. best \%\\
\hline \hline
LSTM & 14.66 & -- \\ \hline
Average & 18.65 & +27\% \\ \hline
Weighted average & 18.07 & +23\% \\ \hline
\end{tabular}
\label{tab:comparison}
\end{table}
\subsection{Remarks}
We conclude this section with several remarks regarding the prediction methodology and results.
\smallskip
\par\noindent\textbf{Threats to validity:}
We decided to take into account data only regarding the past security incidents of a software component.
Therefore, we overcame threats like models generating good predictions only for certain programming languages
or types of components. However, the ground truth at our disposal consists only of the vulnerabilities
reported and in some cases this set of vulnerabilities might be different with the set of vulnerabilities
discovered. Still, we believe that a transparent and open software management environment is a good direction
towards making the two aforementioned sets converge.
\smallskip
\par\noindent\textbf{Computational performance:}
Regarding the time and computational resources required to generate our predictions, as expected the average
and weighted average approaches required negligible resources, while the deep neural network deployed
required on average 5.3 minutes to generate and apply the models for each package running on a commodity laptop (Intel i5 cpu, 8\,GB RAM).
Note that re-computation of the assessments need only be performed when significant new data becomes available
and can be scheduled, e.g. in a monthly basis. Therefore, even the overhead incurred by the neural network can
be considered very small.

%
\section{Software Trustworthiness Model}\label{sec:model}
In order to assess the security quality of a software system, we propose a trust model that considers past
behaviour of underlying components as well as their inter-dependencies within that system. The model is
rooted from an extended version of Bayesian statistics, namely CertainTrust and CertainLogic
(see Section~\ref{sec:comptrust}). Our proposed model considers the probability estimate of software package
vulnerabilities and the inherent certainty of the estimated probability as inputs to the CertainTrust
representation (see Section~\ref{sec:comptrust}). This means that the probability estimate (regarding vulnerabilities)
of a software component can be mapped to the parameter, $t$, the certainty of the aforementioned probability estimate can be mapped to
the parameter, $c$, and prior knowledge about the software component under assessment can be mapped to the parameter,
$f$. In the upcoming sections, we formally devise our probabilistic model for software component trustworthiness, as
well as for the assessment of complex software systems with inter-dependent components.
\subsection{Single component trust model}
In this section, we model trust for individual software components. First, we define the \emph{quality} of a software component.
Second, we define the expected \emph{trustworthiness} of a component, consistent with Bayesian statistical inference, and compatible with
the parameters of the CertainTrust representation.
Last, we prove that the expected trustwothiness assigned to a software component is an optimal estimator of the quality of that component.
%
\begin{definition}[Software component quality]
	We define the \textit{quality} $Q_{i,t_p}$ of a software component $i$ as the probability
	that this component will not be found vulnerable in the next well-defined time period of $t_p$ time steps.
	The complementary probability $1-Q_{i,t_p}$ is the security \textit{risk}
	$R_{i,t_p}$ associated with the component.
	\label{def:trustworth}
	\begin{equation}
		\label{eq:trust_def}
		\begin{split}
		Q_{i,t_p} = Pr[i\text{ not vulnerable}]\\ \text{ for } t\in(t_{now},t_{now}+t_{p})
		\end{split}
	\end{equation}
	\begin{equation}
		R_{i,t_{p}} = 1 - Q_{i,t_{p}}\\ \text{ for } t\in(t_{now},t_{now}+t_{p})
	\end{equation}
\end{definition}
\smallskip
Our goal is to assess the quality of a component in future time intervals.
We use the predictions generated by the prediction module of our tool as a point estimate
for the number of vulnerabilities. Our prediction module predicts the number of
vulnerabilities a software component $i$ will have in the next $l$ time steps, with $l= \lambda t_p,\, \lambda \in \mathbb{N} \text{ and } \lambda \gg1$.
The time period of $l$ (e.g. $l = 9$ months in our use-case) is relatively small in comparison to the total
history of a software component and therefore we assume that vulnerabilities follow a
Binomial distribution (Bernoulli process) with $t_p$ time steps between each trial. Note that this assumption
is used solely for modeling purposes in order to construct a measure of trustworthiness.
Assuming a Bernoulli trial each $t_p$, we can estimate the quality of
a component by the following:
\smallskip
\begin{equation}
	\label{eq:comp_trust}
	T_i = 
	\left\{
	\begin{array}{ll}
		1 - \frac{pred_i}{\lambda}  & \mbox{if } pred_i \leq \lambda \\
		0 & \mbox{if } pred_i > \lambda
	\end{array}
\right.
\end{equation}
where $pred_i$ is the total predicted number of vulnerabilities for the next $l$ time steps,
and the $t_{p}$ argument is suppressed in the notation.
The second part of equation \ref{eq:comp_trust} is included for
completeness reasons. In a system deployment scenario (e.g. see section~\ref{sec:practice}),
the parameter $\lambda$ (or equivalently the parameter $t_p$) is set to a value
that practically makes it impossible to have $pred_i \leq \lambda$.

We now proceed to define the \emph{trustworthiness} of a component, which is an
expectation value about the its quality.
\begin{definition}[Software component trustworthiness]
	We define the \textit{trustworthiness} of a software component $i$ as the
	expectation $E(t,c,f)=t \cdot c + (1-c) \cdot f$ associated with a
	CertainTrust tuple $(t, c, f)$, where $t\in[0,1]$ is an optimal point estimation
	of the component's quality, $c \in [0,1]$ is
	a certainty value for this estimation, and $f$ is a calibrating
	factor stemming from a priori knowledge about the component.
	\label{Defn:softTrust}
\end{definition}
\smallskip
We take $t$ to be equal to $T_i$ from equation \ref{eq:comp_trust}.
The certainty value $c$ is derived from the prediction mechanism
used in the scheme. Inspired by \cite{hauke2013application}, in our implementation we
use the normalized error of the prediction model for the validation phase, as a
conservative goodness-of-fit measure (see Figure \ref{fig:abstractPred}).
Thus, the certainty estimate for a component $i$ is calculated as:
\begin{align}
	&c = 1 - min(error(pred_i, 1))
	\label{eq:certainty}
\end{align}
where $error(pred_i)$ is a normalized error estimate. In our setting, this value is
calculated as per Equation~\ref{eq:cert}.
\begin{theorem}
	Assuming vulnerabilities generate according to a Bernoulli process during a time period
	of $l$ time steps, then equation \ref{eq:comp_trust}, expressing the trustworthiness of a
	software component $i$, is a \ac{MLE} of the quality of the component, as
	defined in equation \ref{eq:trust_def}.
	\label{theorem:comptrust}
\end{theorem}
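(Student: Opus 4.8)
The plan is to set up the Bernoulli/Binomial likelihood explicitly and then recognize equation~\ref{eq:comp_trust} as its maximizer, invoking the functional invariance of maximum likelihood. Under the stated modeling assumption, the window of $l = \lambda t_p$ time steps consists of $\lambda$ independent Bernoulli trials, one per $t_p$ sub-window, each with the same probability $p = R_{i,t_p} = 1 - Q_{i,t_p}$ that the component is found vulnerable. The number of vulnerabilities observed over the period is then a Binomial random variable $X \sim \mathrm{Binomial}(\lambda, p)$, and the quantity we treat as the data is the predicted count $k = pred_i$. First I would write the likelihood
\[
L(p) = \binom{\lambda}{k}\, p^{k} (1-p)^{\lambda - k},
\]
whose logarithm, up to the additive constant $\log\binom{\lambda}{k}$ that does not depend on $p$, equals $k \log p + (\lambda - k)\log(1-p)$.

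Next I would maximize over $p \in [0,1]$. Differentiating the log-likelihood gives $\tfrac{k}{p} - \tfrac{\lambda - k}{1 - p}$, and setting this to zero yields the interior stationary point $\hat{p} = k/\lambda = pred_i/\lambda$; the second derivative $-k/p^{2} - (\lambda-k)/(1-p)^{2}$ is strictly negative, confirming a maximum. Since the quality $Q = 1 - p$ is a strictly decreasing, one-to-one reparametrization of the Bernoulli parameter, the functional invariance of maximum likelihood transfers the estimate directly: the \ac{MLE} of $Q_{i,t_p}$ is $\hat{Q} = 1 - \hat{p} = 1 - pred_i/\lambda$, which is exactly the first branch of equation~\ref{eq:comp_trust} defining $T_i$.

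The step I expect to require the most care is the boundary case $pred_i > \lambda$, where the interior stationary point falls outside the admissible range of a probability. Here I would argue that on $pred_i > \lambda$ the constrained log-likelihood is monotone increasing in $p$ over $[0,1]$, so its maximizer is attained at the endpoint $\hat{p} = 1$, giving $\hat{Q} = 0$ and matching the second branch of equation~\ref{eq:comp_trust}. I would also note that $pred_i$ need not be an integer: because the binomial coefficient is independent of $p$, the maximizer of the log-likelihood remains $k/\lambda$ for any real $k \in [0,\lambda]$, so the derivation applies verbatim to the real-valued predictions emitted by the prediction module. Combining the interior and boundary cases establishes that equation~\ref{eq:comp_trust} is precisely the constrained maximum likelihood estimator of the component quality of equation~\ref{eq:trust_def}, completing the proof.
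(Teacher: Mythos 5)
Your proposal is correct and takes essentially the same route as the paper: the paper's proof simply cites the standard fact that the empirical frequency $pred_i/\lambda$ is the MLE of the Bernoulli success probability and passes to the complement $T_i = 1 - pred_i/\lambda$, implicitly using MLE invariance under reparametrization. Your version spells out the likelihood maximization, the invariance step, the non-integer $pred_i$ issue, and the boundary branch $pred_i > \lambda$ (which the paper treats only as a completeness convention) — a more detailed rendering of the same argument.
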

\begin{proof}
	The average number of vulnerabilities per $t_p$ time steps, during a time period of $l$, $l \gg t_p$ time steps,
	$\frac{pred_i}{\lambda}$, is the Maximum Likelihood Estimator (MLE) for the probability of success of each Bernoulli trial.
	We have:
	\begin{displaymath}
		\frac{pred_i}{\lambda} \estimatesB Pr[\text{trial succes}] = Pr[\text{i vulnerable}]
	\end{displaymath}
	\smallskip
	\\Thus, the complementary probability, $T_i$ is an MLE of the component's quality
	$Q_{i,t_{p}}$ as expressed in equation
	\ref{eq:trust_def}:
	\begin{displaymath}
		\begin{split}
		T_i = 1 - \frac{pred_i}{\lambda} \estimatesB  Pr[\text{i not vulnerable}] \\\text{for } t\in(t_{now},t_{now}+t_{p})
		\end{split}
	\end{displaymath}
\end{proof}
\subsection{Software system trust model}\label{sec:combining}
Most modern computing systems consist of multiple software components. These components can depend on each other,
or they can be configured in a way to give the system redundancy, i.e. to allow the system to uphold its security
guarantees even if one of the components is compromised. An example of the latter is a private database where entries
are secret-shared~\cite{shamir1979share} between two or more machines. The security dependencies found in a
software system can be depicted in a graph, similar to that of Figure~\ref{fig:graph}. The graph can be straightforwardly
expressed via a propositional logic formula. Setting the atomic formula $i_{i\in\{A,B,C,D,X,Y\}}$ to model that the software component $i$ is safe, i.e.
it is not vulnerable, the resulting propositional formula is:
\begin{equation}
	SYSTEM \logeq B \land D \land [(A \land C) \lor ( X \land Y)]
	\label{eq:proplog}
\end{equation}
Note that the software components are assumed independent at this stage, i.e. they do not share code. A well-implemented
``divide and conquer`` strategy, like the one enforced by Debian packaging, would satisfy this assumption.
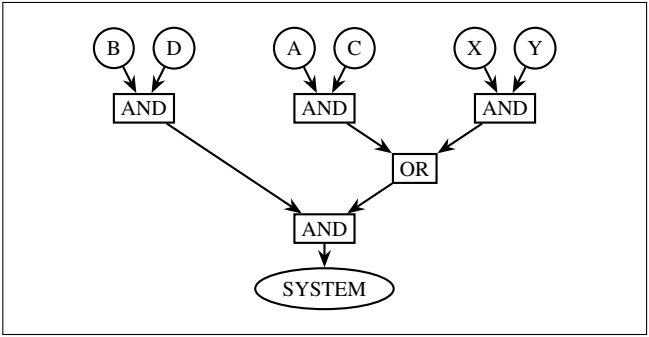
\begin{figure}
\caption{Security dependencies graph of a complex system.}
\label{fig:graph}
\smallskip
\begin{framed}
\centering
\begin{tikzpicture}[scale=0.8]
\begin{scope}[every node/.style={circle,thick,draw,scale=0.8}]
	\node (A) at (0,3) {A};
    	\node (B) at (-3,3) {B};
    	\node (C) at (1,3) {C};
    	\node (D) at (-2,3) {D};
    	\node (X) at (3,3) {X};
    	\node (Y) at (4,3) {Y} ;
\end{scope}
\begin{scope}[every node/.style={rectangle,thick,draw,scale=0.8}]
	\node (AND1) at (-2.5,2) {AND};
	\node (AND2) at (0.5,2) {AND};
	\node (AND3) at (3.5,2) {AND};
	\node (OR1) at (2,1) {OR};
	\node (AND4) at (0.5,0) {AND};
\end{scope}
\begin{scope}[every node/.style={ellipse,thick,draw,scale=0.8}]
	\node (S) at (0.5,-1) {SYSTEM};
\end{scope}
\begin{scope}[>={Stealth[black]},
              every node/.style={fill=white,rectangle,scale=0.6},
              every edge/.style={draw=black,thick}]
	      \path [->] (B) edge (AND1);
	      \path [->] (D) edge (AND1);
	      \path [->] (A) edge (AND2);
	      \path [->] (C) edge (AND2);
	      \path [->] (X) edge (AND3);
	      \path [->] (Y) edge (AND3);
	      \path [->] (AND2) edge (OR1);
	      \path [->] (AND3) edge (OR1);
	      \path [->] (AND1) edge (AND4);
	      \path [->] (OR1) edge (AND4);
	      \path [->] (AND4) edge (S);
\end{scope}
\end{tikzpicture}
\end{framed}
\end{figure}
CertainLogic's $AND^{ct}$ and $OR^{ct}$ operators (see Definitions \ref{Defn:CT_AND} and \ref{Defn:CT_OR}) can be used
to generalize the propositional logic operators of Equation~\ref{eq:proplog}, leading to our system trustworthiness
definition.
\begin{definition}[Software system trustworthiness]
	The trustworthiness of a system $S$, whose security dependencies can be expressed by
	a propositional logic formula with no variable repetition (like the one in \ref{eq:proplog}) is defined as the evaluation
	of the formula with the propositional logic terms substituted by CertainTrust terms and the propositional logic
	operators substituted by CertainLogic operators. In relaxed mathematical notation:
	\begin{equation}
		\label{eq:systrust}
	\begin{aligned}
		\text{if } S \logeq F(a_1,\cdots,a_n) \text{ , then }\\ T_S = F[a_i^{ct}/a_i,\land^{ct}/\land,\lor^{ct}/\lor]
	\end{aligned}
	\end{equation}
	where $F$ is a propositional logic formula with variables $a_1,\cdots,a_n$, which can be brought to a form, so that
	each variable appears once. This constraint stems from the fact that CertainLogic operators, similarly to Subjective
	Logic ones, are designed to operate on independent propositions.
	\label{def:systrust}
\end{definition}
\begin{theorem}
	The value assigned to a software system by the evaluation of equation~\ref{eq:systrust} is
	a valid expectation of the quality of the system (seen as a component),
	as defined in \ref{eq:trust_def}. Specifically, the trust value $t$ is a maximum a posteriori probability (MAP) estimate of the
	quality of the system.
\end{theorem}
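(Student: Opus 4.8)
The plan is to prove both claims by structural induction on the propositional formula $F$, exploiting the no-variable-repetition constraint of Definition~\ref{def:systrust} to guarantee that the two subformulas combined at every internal node correspond to \emph{independent} events. First I would make precise the quantity being estimated. Since component $i$ is safe with probability $Q_i$ and the components are independent, the probability that the whole system is safe is obtained by propagating the $Q_i$ bottom-up through $F$ using the ordinary probability rules, namely $Pr[A\land B]=Pr[A]\,Pr[B]$ and $Pr[A\lor B]=Pr[A]+Pr[B]-Pr[A]\,Pr[B]$ for independent $A,B$. Writing $Q_S=\phi(Q_1,\dots,Q_n)$ for the resulting composite function, the target of the estimation is exactly this $Q_S$, which is the quality of the system viewed as a single component in the sense of equation~\ref{eq:trust_def}.

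For the first claim (that $T_S$ is a valid expectation of $Q_S$), the key fact I would invoke is that the CertainLogic operators $\land^{ct}$ and $\lor^{ct}$ of Definitions~\ref{Defn:CT_AND} and~\ref{Defn:CT_OR} are constructed to be consistent with Bayesian probability at the level of expectation values, i.e.\ $E(o_A\land^{ct}o_B)=E(o_A)\,E(o_B)$ and $E(o_A\lor^{ct}o_B)=E(o_A)+E(o_B)-E(o_A)\,E(o_B)$. In the base case a single variable $a_i$ is assigned the CertainTrust tuple whose expectation $E_i$ is, by Definition~\ref{Defn:softTrust}, a valid expectation of $Q_i$. In the inductive step these two identities are precisely the probabilistic AND/OR rules used to build $\phi$, so $E(T_S)=\phi(E_1,\dots,E_n)$ is a valid expectation of $Q_S=\phi(Q_1,\dots,Q_n)$, which establishes the first claim.

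For the second claim (that the trust value $t$ of $T_S$ is a MAP estimate of $Q_S$), I would build on Theorem~\ref{theorem:comptrust}: each component trust value $t_i=T_i$ is the MLE of $Q_i$, which under the flat prior implicit in the $f$ calibration coincides with the posterior mode, i.e.\ the MAP estimate. Since the $Q_i$ are inferred from disjoint evidence, the joint posterior factorizes, and I would then appeal to the functional-invariance property of the maximum likelihood estimator (and, for the monotone AND/OR combinations, of the posterior mode) to argue that propagating the individual modes through $\phi$ yields the MAP estimate of $Q_S$; identifying this propagated value with the trust component produced by the CertainLogic evaluation would close the induction.

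The main obstacle I expect is exactly that last identification. Whereas the CertainLogic operators are designed so that the \emph{expectation} $E$ combines by the probability rules, the \emph{trust value} $t$ of a combined opinion is entangled with the certainty and prior parameters and is not literally $\phi(t_1,\dots,t_n)$; moreover the posterior mode, unlike the MLE, does not in general commute with nonlinear transformations. The delicate part of the argument is therefore to show that, under the independence guaranteed by the no-repetition assumption and the $f$-calibration that renders the prior flat, the mode of the CertainLogic-combined distribution does coincide with the image of the component modes under the monotone maps $\land,\lor$, so that the reported $t$ is genuinely the posterior mode of $Q_S$ rather than merely a close approximation to it.
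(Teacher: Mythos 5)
Your route is essentially the paper's route, made explicit. The paper's own proof is three sentences: it invokes Theorem~\ref{theorem:comptrust} for the leaves, asserts that $AND^{ct}$ and $OR^{ct}$ are ``consistent with Bayesian statistics'' and therefore yield MAP estimates for the conjunction and disjunction of the underlying propositions, and concludes that evaluating the formula gives an estimate consistent with propositional logic. Your structural induction over $F$, with the no-repetition constraint supplying independence at every internal node, is exactly this argument written out carefully. On the first claim you are in fact \emph{more} rigorous than the paper: the identities $E(o_A \land^{ct} o_B) = E(o_A)\,E(o_B)$ and $E(o_A \lor^{ct} o_B) = E(o_A)+E(o_B)-E(o_A)\,E(o_B)$ that you invoke are correct and can be verified by direct algebra from the definitions in Tables~\ref{tab:operators} and~\ref{tab:operators2} (e.g.\ one checks $1-E_{A\lor^{ct}B}=(1-E_A)(1-E_B)$ term by term), whereas the paper never states, let alone verifies, this expectation-preservation property.

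The obstacle you flag at the end is genuine, but note that it is not a shortfall of your proposal relative to the paper --- it is precisely the step the paper's proof elides. The combined trust value $t_{A\land^{ct}B}$ is \emph{not} $t_A t_B$ (it carries correction terms in the $c$'s and $f$'s), and a posterior mode does not in general push forward through a nonlinear map, so the claim that the evaluated formula's $t$ is a MAP estimate of $Q_S$ does not follow from the component-level MLE property plus operator consistency alone; some argument about the form of the combined posterior (or about the $f$-calibration flattening the prior) is needed. The paper supplies no such argument; it simply asserts the conclusion, and even slides between ``MLE'' and ``MAP'' within the same proof. So where you are complete you match or exceed the paper, and where you honestly stop, the paper stops too --- it just does so with more confidence than the mathematics warrants.
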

\begin{proof}
	From Theorem~\ref{theorem:comptrust} we have that for each component $a_1,\cdots,a_n$,
	its trustworthiness is a valid expectation of its quality, i.e. the trust value
	$t$ is an MLE of the component's quality.
	CertainLogic's $AND^{ct}$ and $OR^{ct}$ operators are consisted with Bayesian statistics, and therefore provide
	MAP estimates for the degree of truth of (a) both proposition simultaneously and (b) of at least one of the two propositions, respectively.
	Therefore, the evaluation of a CertainTrust logical formula provides an MLE for the truth value of the underlying
	reasoning that is in accordance with propositional logic rules.
\end{proof}
Having established a systematic method for calculating the trustworthiness of complex software systems, starting from a
graph representation (or the equivalent propositional logic formula), we proceed to examine the issue of
extracting the aforementioned graph representation from real-world software systems. For example take
the case where the overall system is a database with data secret-shared among two sub-systems. Due to the
secret sharing technique, both sub-systems would need to be compromised in order for the overall system to
be compromised, i.e. for the data to be leaked. The naive solution of calculating a trust score for each
subsystem and then combining the two scores using CertainLogic's $OR^{ct}$ operator would only produce
a meaningful result if the two sub-systems did not share any software components. This, in general is not
the case. Consequently, it is important to carefully construct the system dependency graph before proceeding
with the trustworthiness calculation. To follow on our established example of data entries secret-shared between
two sub-systems, let these subsystems consist of the following components: $S:\{S_1, S_2\}$, with $S_1:\{A, B, C, D\}$ and $S_2:\{B, D, X, Y\}$.
Following the notation of this section, the resulting propositional logic formula for the security of the
system as a whole would be: $S = (A\land B\land C\land D)\lor(B \land D \land X \land Y)$, leading to the
simplified formula already presented in Equation~\ref{eq:proplog} and Figure~\ref{fig:graph}. Take note
that substituting the propositional logic operators with CertainLogic's counterparts in the initial formula
would not be possible, due to the appearance of variables more than once. Although the assumption of being able
to express the propositional logic formula in a form without variable repetition was satisfied in this
example, it is not always the case. In the case where such a simplification is not possible, we follow
an approach, a variation of which was shown to be optimal in~\cite{josang2008optimal}. Specifically,
we express the formula in its disjunctive normal form and proceed to delete terms until the formula can
be expressed with no variable repetition. The criteria with which terms are deleted are (a) the resulting
formula has the least number of variable repetitions, and (b) the term deleted would be the one with the
lowest certainty value if its CertainTrust representation was calculated. The technique described above
is a conservative approach erring on the side of caution, meaning that the resulting formula will be harder to satisfy, and thus the
resulting score should be considered a lower bound on the quality of the system.
%
\medskip\par\noindent\textbf{Fusion of assessments from different sources: }
It is generally the case that different combinations of Data and Prediction modules (see Figure~\ref{fig:arch}) yield
different results for the trustworthiness of the same software component. For example, an anti-virus company could
have its own database of security incidents, in addition to the publicly available ones. In addition, it could use a different
prediction technique, e.g. one that includes static analysis of the software components. A system administrator should be able
to incorporate the knowledge provided by this source into the trustworthiness opinion they have already computed using the
means available to them. To this end, CertainLogic's fusion operators (see Appendix~\ref{ap:fusion}) can be used to combine
opinions about the same software components. These operators have been designed to model e.g., the scenario where a person
gets conflicting recommendations about a product in an online marketplace. The parallelism to our scenario is straightforward,
and thus these operators naturally fit our use-case, both from a mathematical and a sociological point of view.
\subsection{Tuning the model}\label{sec:practice}
There are some decisions to be made, concerning the way our model is going to be applied in
a real-world setting.  The model parameters need to be set or bounded by
empirical values, so as to have results that closely depict reality.
\smallskip\par\noindent\textbf{Setting the parameter $\boldsymbol\lambda$: }
The parameter $\lambda$ is recommended to be set, so that: $$\lambda > \sum\limits_{i=1}^{n} pred_i$$
This way, the second part of Equation~\ref{eq:comp_trust} will not be activated, even when considering
the worst case scenario of a single system that depends on all the components that are predicted
to be vulnerable. In our scenario, where $l$ is the equivalent of 9 months, we set $\lambda=4\cdot30\cdot 9=1080$, which effectively means that the expectation
value produced by M-STAR is the probability that a system will be found valuable, sampled at intervals of six hours.

\smallskip\par\noindent\textbf{Limiting value ranges for estimates: }
All three model parameters ($t$,$c$,$f$) are probabilities, therefore they live in
the real interval $[0,1]$. However, in a real-world deployment of M-STAR we may want
to limit the range of the values to a subinterval of $[0,1]$, in order to avoid corner cases
and produce better results. Specifically, we limit the certainty estimate range to $[0.100,0.990]$,
with the following reasoning. First, due to the normalized error calculation formula,
for packages that have very few vulnerabilities, a reasonably good prediction can lead to a certainty of $0$.
Second, even if a model fits reality perfectly (the error is very close to $0$) in the validation
interval, it is likely that this will not be the case for the future prediction, and thus we assume
a possible error margin of $0.01$.
\smallskip\par\noindent\textbf{Setting prior knowledge value (f): }
The \emph{a priori} expectation of the quality of a component is set empirically, based on observations
we have made on the Debian ecosystem as a whole. Due to the power-law like distribution of vulnerabilities
among Debian's components (see Figure~\ref{fig:distr}), we decide to partition the dataset into two, when setting the prior knowledge value.
First, for the top 20 vulnerable packages, which represent the dominating subset, we set their initial expectation,
as the average number of vulnerabilities of this group during a two-year interval (2015-2016). For the packages
that had at least one vulnerability in their history, we set their initial expectation to the normalized average
number of vulnerabilities of those packages in the same two-year interval (2015-2016). Regarding the two cases mentioned above,
we additionally apply a scaling factor, accounting for the global observations we have made showcasing that the average
number of vulnerabilities per package is increasing, and computed by fitting a first order polynomial on the data of
Figure~\ref{fig:average_per_year}. The final value is therefore $f' = 1.05\cdot f-0.05$. For the packages that have
no reported vulnerabilities, we set the initial expectation to $1$, i.e. we consider them fully trustworthy.
Apart from the empirical solution provided here, there could be other options, e.g. set the initial value to
$0.5$ globally (non-informative prior), or set the initial value by performing static analysis on the code of the components and pinpointing high-risk
ones.
%
%

%
\section{Visualization and deployment}\label{sec:apl}
\par\noindent\textbf{Visualization: }
The visual representation of calculated trust and certainty values is essential to actually aid users in decision-making processes.
Consequently, we based the design of our visualization module on T-Viz~\cite{volk2014communicating}, a tested foundation evaluated in various user studies.
We migrated T-Viz to the field of software security and added a representation of the security history of
the component as help to the system administrator. In Figure~\ref{fig:example}, we can see the trustworthiness assessment of our full-fledged
installation, focusing on the components that have the lowest trustworthiness score (expectation). The lengths of the slices
represent the trust values calculated for each component, their width show the associated certainty, their color
characterize the expectation, and the value in the middle the trustworthiness (expected quality) of the system as a whole.
The slices are clickable and produce a detailed report on the component, including the smoothed time-series
of the actual vulnerabilities (blue), as well as the time-series produced by the prediction mechanism (red).
Although we conducted informal interviews that generated positive comments about the visualization of Figure~\ref{fig:example},
an extensive user study would be beneficial. However, this is out of scope of this paper.
\\\par\noindent\textbf{Deployment: }
After going through the empirical and theoretical foundations of M-STAR,
we proceed to showcase the utility of our system by providing sample use-cases.
First, we compare the trustworthiness of two Debian packages, namely \emph{firefox-esr}
and \emph{linux} (kernel package).
Then, we compare the trustworthiness of two systems,
where one is a full-fledged Debian installation for general use and the other is
a web server. Finally, we assess the security of a fictional
system comprised of the two aforementioned systems, in a configuration, where
there is a 1+1 redundancy. The results are summarized in Table~\ref{tab:results}.
The first three columns show the computed parameters of the CertainTrust model,
the fourth the resulting expectation score, followed by an ``equivalent'' expected
number of vulnerabilities, the actual number, and the ratios of the equivalent and
actual vulnerabilities along with their normalized error. The latter expresses the error
in the expectation about the relative quality of two configurations.
We observe that M-STAR assesses the relative quality of the two systems with
an error of less than 10\%. The calibrating parameters of the system can be modified
over time by the user\,/\,administrator, so the equivalent number of vulnerabilities estimates the actual
number more closely.
However, the issue of accurately predicting the absolute number of vulnerabilities requires further
investigation.
For the 1-out-of-2 scenario, we observe that there is a rather small 7\% decrease in the equivalent
vulnerability number w.r.t. the single web server, which is expected as the two systems have a lot of components in common.
In this case, more software diversity would be required to achieve a better security level.
\begin{table*}[htbp]
	\centering
\caption{Trust assessment of different configurations}
\begin{tabular}{|l|c|c|c|c|c|c|c|c|c|}
\hline
& t & c & f & expectation & equivalent number & actual number & \textbf{ratio equivalent} & \textbf{ratio actual}& \textbf{ratio norm. error}\\
\hline \hline
linux & 0.968 & 0.966 & 0.974 & 0.968 & 35 & 70 & \multirow{2}{*}{\textbf{0.673}} & \multirow{2}{*}{\textbf{0.833}}& \multirow{2}{*}{\textbf{0.192}}  \\ \cline{1-7}
firefox & 0.950 & 0.920 & 0.974 & 0.952 & 52 & 84 & & &\\ \hline
Full-fledged & 0.687 & 0.662 & 0.502 & 0.625 & 405 & 809 &\multirow{2}{*}{\textbf{1.770}} & \multirow{2}{*}{\textbf{1.954}}& \multirow{2}{*}{\textbf{0.094}} \\ \cline{1-7}
Web server & 0.840 & 0.690 & 0.673 & 0.788 & 229 & 414 & & &\\ \hline
1-out-of-2 & 0.842 & 0.693 & 0.710 & 0.802 & 214 & -- & -- &-- & --\\ \hline
\end{tabular}
\label{tab:results}
\end{table*}
\begin{figure}
\centering
\includegraphics[width=\columnwidth]{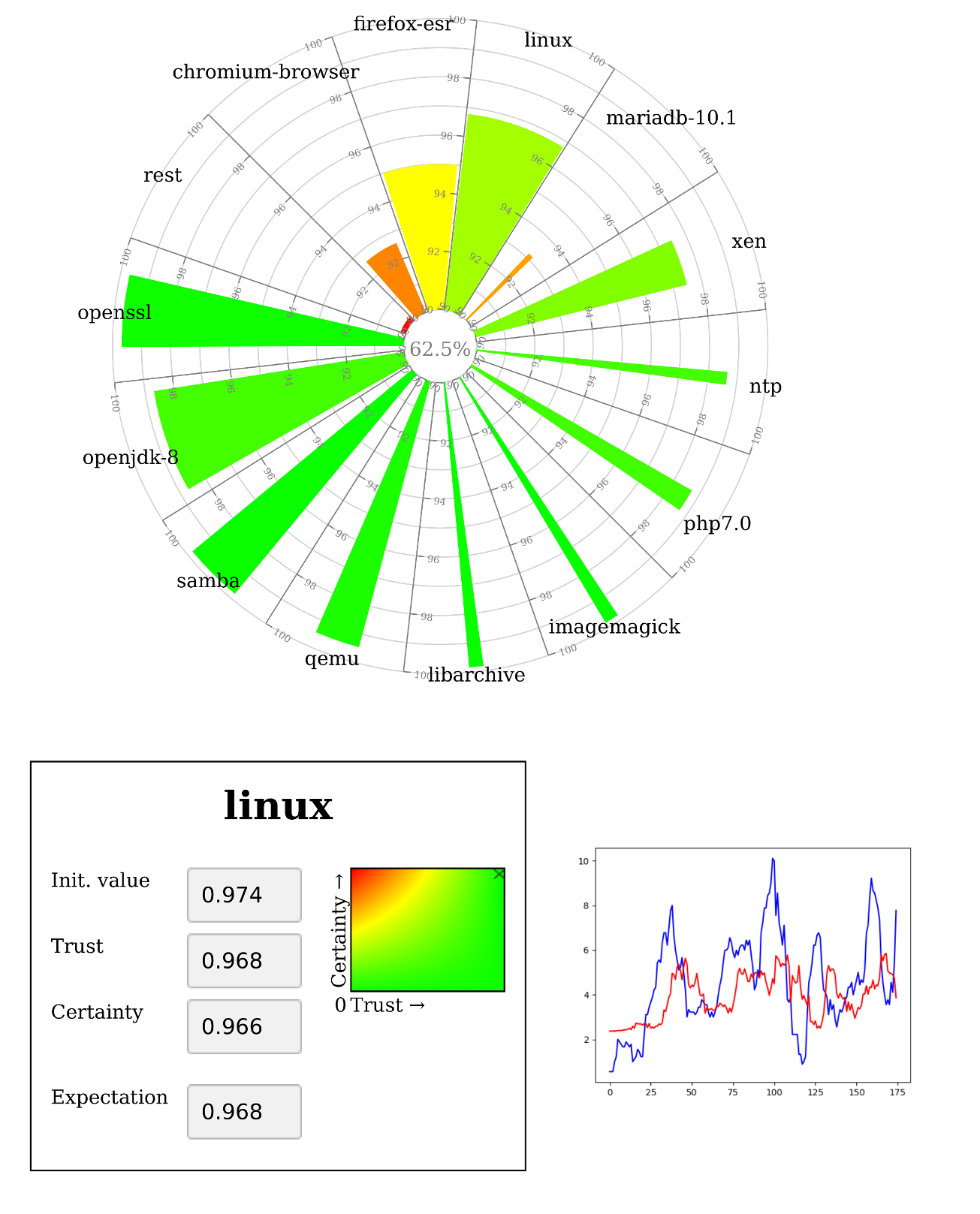}
\caption{Trustworthiness assessment of the full-fledged system.}
\label{fig:example}
\end{figure}

%
\section{Related Work}\label{sec:related}
We first present some significant recent work in the area of vulnerability discovery,
which is adjacent and complimentary to our work, and then discuss related work
in the field of predicting vulnerable software components. Finally, we go over
work in the field of software risk and trustworthiness assessment.
\medskip
\par\noindent\textbf{Vulnerabilities and malware discovery.}
There is a lot of ongoing work in the field of automatic vulnerability discovery
in software. In \cite{yamaguchi2013chucky}, the authors combine techniques from
static analysis and machine learning to identify missing checks that lead to vulnerabilities
in several software projects, like the Linux kernel and Pidgin.
In \cite{yamaguchi2014modeling}, a novel representation of code, namely code property graphs,
are introduced and their usefulness in identifying software vulnerabilities is showcased,
while \cite{yamaguchi2015automatic} deals with the automated finding of taint-style
vulnerabilities.
Other approaches that make use of virtualization and other forms of dynamic analysis
techniques have been recently proposed, e.g. in Digtool~\cite{pan2017digtool}, which finds vulnerabilities
at the binary level, in \cite{petsios2017nezha} where differential testing is used, in~\cite{stephens2017driller}
where the authors use fuzzing in combination with symbolic execution, and in~\cite{schumilo2017kafl}
where hardware-assisted fuzzing is established. Regarding automated discovery and classification of
malware, \cite{arp2014drebin} offers a lightweight tool for detection of malware in Android.
Our work is orthogonal, but complimentary to the contributions highlighted above, in the sense
that they handle specific cases of vulnerabilities and as seen in the real world, even when
related discovery tools are employed, end products still contain a multitude of vulnerabilities.
Techniques similar to the above can also act as evidence sources for M-STAR's trustworthiness
calculation.
\medskip
\par\noindent\textbf{Vulnerabilities prediction.}
The area of predicting which software components are more likely to contain
vulnerabilities has also yielded some prominent results. The pioneering work of
Neuhaus et al.~\cite{neuhaus2007predicting} analyzed C/C++ files of the Mozilla
codebase and classified them as vulnerable or not using SVMs. Specifically, components
that had similar imports and function calls were shown likely to share vulnerability status.
In~\cite{neuhaus2009beauty}, the authors leverage dependency relationships to classify Red Hat
linux packages, whereas in \cite{scandariato2014predicting}, text mining of the source code is employed to predict if a given
component is vulnerable. Finally, in~\cite{roumani2015time}, linear autoregressive models are shown to be
reasonably accurate at forecasting vulnerabilities, and in~\cite{jimenez2016vulnerability}, a comparison of proposed
prediction techniques is performed on Linux kernel components.
\medskip
\par\noindent\textbf{Software trustworthiness and risk.}
Research regarding the trustworthiness of software, especially w.r.t. its security
properties goes back to the \emph{Trusted Software Methodology} \cite{amoroso1994process},
a process-oriented methodology developed in the 90's. In a recent report~\cite{boland2010toward}, NIST
proposes a framework for assessing software trustworthiness by weighting in evaluations carried
out either by automatic code checkers or by experts, in order to deduce an overall trustworthiness assessment
for the component under question. Our system on the contrary does not rely on expert opinions, although our model
can readily accommodate them. Our work is most closely related to that of Bugiel et al.~\cite{bugiel2011scalable},
where the authors propose a tool for software trustworthiness assessment of Debian systems. Although this work
served as an inspiration to our system, our work differs considerably in nature,
mainly because it provides a mathematically and empirically verified solution to trustworthiness assessment,
contrary to the ad-hoc approach of the aforementioned paper.

\section{Conclusion, Limitations and future work}\label{sec:concl}
In this paper we presented M-STAR, a complete framework for assessing the
trustworthiness of software systems. M-STAR's modular design offers adaptability to
various use-cases and different technologies. We employ state-of-the-art
prediction techniques and Bayesian probabilistic models, known
as computational trust models, in order to model and calculate our trustworthiness assessments.
Our prototype, written in python, will be made publicly available as a web interface,
and the code will be published on github.
During our detailed investigation of vulnerabilities in the Debian ecosystem,
we came to the conclusion that proactive techniques like M-STAR are necessary,
as there is no observable decrease in the vulnerability rate of software in Debian,
even when considering single versions of software.

We believe that M-STAR is an important contribution
towards assessing the real-world security of systems.
However, M-STAR is only a first step
towards this goal and consequently exhibits some limitations, which should be addressed in
future work. Namely, experimenting with other datasets (other than Debian) and
different prediction techniques, would be highly beneficial, as static and dynamic analysis
tools become more generic and accurate. Furthermore, deploying M-STAR in the wild and
performing user studies among system administrators regarding the impact of M-STAR's assessments
on their choices, would also be interesting.

\section*{Acknowledgements}
This work has been co-funded by the DFG as part of project S1 within the CRC 1119
CROSSING.
\begin{acronym}
	\acro{MLE}{Maximum Likelihood Estimator}
	\acro{DSA}{Debian Security Advisory}
	\acro{SVM}{Support Vector Machine}
	\acro{LSTM}{Long Short-Term Memory}
	\acro{SMC}{Secure Multiparty Computation}
\end{acronym}



\bibliographystyle{IEEEtran}
\bibliography{IEEEabrv,../bib/paper}

\begin{thebibliography}{10}
\providecommand{\url}[1]{#1}
\csname url@samestyle\endcsname
\providecommand{\newblock}{\relax}
\providecommand{\bibinfo}[2]{#2}
\providecommand{\BIBentrySTDinterwordspacing}{\spaceskip=0pt\relax}
\providecommand{\BIBentryALTinterwordstretchfactor}{4}
\providecommand{\BIBentryALTinterwordspacing}{\spaceskip=\fontdimen2\font plus
\BIBentryALTinterwordstretchfactor\fontdimen3\font minus
  \fontdimen4\font\relax}
\providecommand{\BIBforeignlanguage}[2]{{%
\expandafter\ifx\csname l@#1\endcsname\relax
\typeout{** WARNING: IEEEtran.bst: No hyphenation pattern has been}%
\typeout{** loaded for the language `#1'. Using the pattern for}%
\typeout{** the default language instead.}%
\else
\language=\csname l@#1\endcsname
\fi
#2}}
\providecommand{\BIBdecl}{\relax}
\BIBdecl

\bibitem{wiki:wannacry}
\BIBentryALTinterwordspacing
Wikipedia, ``Wanna{C}ry ransomware attack --- {W}ikipedia{,} {T}he {F}ree
  {E}ncyclopedia,'' 2017, [Online; accessed 27-June-2017]. [Online]. Available:
  \url{https://en.wikipedia.org/w/index.php?title=WannaCry_ransomware_attack&oldid=787753481}
\BIBentrySTDinterwordspacing

\bibitem{hackerone2017report}
\BIBentryALTinterwordspacing
{HackerOne}, ``The hacker-powered security report 2017,'' 2017. [Online].
  Available:
  \url{https://www.hackerone.com/resources/hacker-powered-security-report}
\BIBentrySTDinterwordspacing

\bibitem{bilge2012before}
L.~Bilge and T.~Dumitras, ``Before we knew it: an empirical study of zero-day
  attacks in the real world,'' in \emph{Proceedings of the 2012 ACM conference
  on Computer and communications security}.\hskip 1em plus 0.5em minus
  0.4em\relax ACM, 2012, pp. 833--844.

\bibitem{brown2017finding}
F.~Brown, S.~Narayan, R.~S. Wahby, D.~Engler, R.~Jhala, and D.~Stefan,
  ``Finding and preventing bugs in javascript bindings,'' in \emph{Security and
  Privacy (SP), 2017 IEEE Symposium on}.\hskip 1em plus 0.5em minus 0.4em\relax
  IEEE, 2017, pp. 559--578.

\bibitem{biswas2017venerable}
P.~Biswas, A.~D. Federico, S.~A. Carr, P.~Rajasekaran, S.~Volckaert, Y.~Na,
  M.~Franz, and M.~Payer, ``Venerable variadic vulnerabilities vanquished,'' in
  \emph{26th {USENIX} Security Symposium ({USENIX} Security 17)}.\hskip 1em
  plus 0.5em minus 0.4em\relax Vancouver, BC: {USENIX} Association, 2017, pp.
  186--198.

\bibitem{wang2017how}
P.~Wang, J.~Krinke, K.~Lu, G.~Li, and S.~Dodier-Lazaro, ``How double-fetch
  situations turn into double-fetch vulnerabilities: A study of double fetches
  in the linux kernel,'' in \emph{26th {USENIX} Security Symposium ({USENIX}
  Security 17)}.\hskip 1em plus 0.5em minus 0.4em\relax Vancouver, BC: {USENIX}
  Association, 2017, pp. 1--16.

\bibitem{pan2017digtool}
J.~Pan, G.~Yan, and X.~Fan, ``Digtool: A virtualization-based framework for
  detecting kernel vulnerabilities,'' in \emph{26th {USENIX} Security Symposium
  ({USENIX} Security 17)}.\hskip 1em plus 0.5em minus 0.4em\relax Vancouver,
  BC: {USENIX} Association, 2017, pp. 149--165.

\bibitem{schumilo2017kafl}
S.~Schumilo, C.~Aschermann, R.~Gawlik, S.~Schinzel, and T.~Holz, ``kafl:
  Hardware-assisted feedback fuzzing for {OS} kernels,'' in \emph{26th {USENIX}
  Security Symposium ({USENIX} Security 17)}.\hskip 1em plus 0.5em minus
  0.4em\relax Vancouver, BC: {USENIX} Association, 2017, pp. 167--182.

\bibitem{stephens2017driller}
N.~Stephens, J.~Grosen, C.~Salls, A.~Dutcher, R.~Wang, J.~Corbetta,
  Y.~Shoshitaishvili, C.~Kruegel, and G.~Vigna, ``Driller: Augmenting fuzzing
  through selective symbolic execution,'' in \emph{23nd Annual Network and
  Distributed System Security Symposium, {NDSS} 2016, San Diego, California,
  USA, February 21-24, 2016}, 2016.

\bibitem{Debian}
\BIBentryALTinterwordspacing
J.~Dassen, C.~Stickelman, S.~Kleinmann, and S.~Rudolph, ``{The Debian GNU/Linux
  FAQ},'' 2016. [Online]. Available:
  \url{https://www.debian.org/doc/manuals/debian-faq/}
\BIBentrySTDinterwordspacing

\bibitem{stallman1991gnu}
R.~Stallman \emph{et~al.}, ``Gnu general public license,'' \emph{Free Software
  Foundation, Inc., Tech. Rep}, 1991.

\bibitem{debian-dsa}
\BIBentryALTinterwordspacing
``{Debian security FAQ},'' 2016. [Online]. Available:
  \url{https://www.debian.org/security/faq}
\BIBentrySTDinterwordspacing

\bibitem{Habib2015BookChap}
S.~M. Habib, F.~Volk, S.~Hauke, and M.~M{\"{u}}hlh{\"{a}}user, \emph{The Cloud
  Security Ecosystem - Technical, Legal, Business and Management Issues}.\hskip
  1em plus 0.5em minus 0.4em\relax Syngress, 2015, ch. Chapter 21:
  Computational trust methods for security quantification in the cloud
  ecosystem, pp. 463 -- 493.

\bibitem{Ries2009Extending}
S.~Ries, ``Extending bayesian trust models regarding context-dependence and
  user friendly representation,'' in \emph{Proceedings of the ACM SAC}.\hskip
  1em plus 0.5em minus 0.4em\relax New York, NY, USA: ACM, 2009, pp.
  1294--1301.

\bibitem{ries2011certainlogic}
S.~Ries, S.~M. Habib, M.~M{\"u}hlh{\"a}user, and V.~Varadharajan,
  ``Certainlogic: A logic for modeling trust and uncertainty,'' in
  \emph{International Conference on Trust and Trustworthy Computing}.\hskip 1em
  plus 0.5em minus 0.4em\relax Springer, 2011, pp. 254--261.

\bibitem{josang2001logic}
A.~J{\o}sang, ``A logic for uncertain probabilities,'' \emph{International
  Journal of Uncertainty, Fuzziness and Knowledge-Based Systems}, vol.~9,
  no.~03, pp. 279--311, 2001.

\bibitem{sailer2004design}
R.~Sailer, X.~Zhang, T.~Jaeger, and L.~Van~Doorn, ``Design and implementation
  of a tcg-based integrity measurement architecture.'' in \emph{USENIX Security
  Symposium}, vol.~13, 2004, pp. 223--238.

\bibitem{rescorla2005finding}
E.~Rescorla, ``Is finding security holes a good idea?'' \emph{IEEE Security \&
  Privacy}, vol.~3, no.~1, pp. 14--19, 2005.

\bibitem{ozment2006milk}
A.~Ozment and S.~E. Schechter, ``Milk or wine: does software security improve
  with age?'' in \emph{USENIX Security Symposium}, 2006, pp. 93--104.

\bibitem{mell2006common}
P.~Mell, K.~Scarfone, and S.~Romanosky, ``Common vulnerability scoring
  system,'' \emph{IEEE Security \& Privacy}, vol.~4, no.~6, 2006.

\bibitem{hauke2013application}
S.~Hauke, S.~Biedermann, M.~M{\"u}hlh{\"a}user, and D.~Heider, ``On the
  application of supervised machine learning to trustworthiness assessment,''
  in \emph{Trust, Security and Privacy in Computing and Communications
  (TrustCom), 2013 12th IEEE International Conference on}.\hskip 1em plus 0.5em
  minus 0.4em\relax IEEE, 2013, pp. 525--534.

\bibitem{shamir1979share}
A.~Shamir, ``How to share a secret,'' \emph{Communications of the ACM},
  vol.~22, no.~11, pp. 612--613, 1979.

\bibitem{josang2008optimal}
A.~J{\o}sang and T.~Bhuiyan, ``Optimal trust network analysis with subjective
  logic,'' in \emph{Emerging Security Information, Systems and Technologies,
  2008. SECURWARE'08. Second International Conference on}.\hskip 1em plus 0.5em
  minus 0.4em\relax IEEE, 2008, pp. 179--184.

\bibitem{volk2014communicating}
F.~Volk, S.~Hauke, D.~Dieth, and M.~M{\"u}hlh{\"a}user, ``Communicating and
  visualising multicriterial trustworthiness under uncertainty,'' in
  \emph{Privacy, Security and Trust (PST), 2014 Twelfth Annual International
  Conference on}.\hskip 1em plus 0.5em minus 0.4em\relax IEEE, 2014, pp.
  391--397.

\bibitem{yamaguchi2013chucky}
F.~Yamaguchi, C.~Wressnegger, H.~Gascon, and K.~Rieck, ``Chucky: Exposing
  missing checks in source code for vulnerability discovery,'' in
  \emph{Proceedings of the 20th ACM Conference on Computer and Communications
  Security}.\hskip 1em plus 0.5em minus 0.4em\relax ACM, 2013, pp. 499--510.

\bibitem{yamaguchi2014modeling}
F.~Yamaguchi, N.~Golde, D.~Arp, and K.~Rieck, ``Modeling and discovering
  vulnerabilities with code property graphs,'' in \emph{Security and Privacy
  (SP), 2014 IEEE Symposium on}.\hskip 1em plus 0.5em minus 0.4em\relax IEEE,
  2014, pp. 590--604.

\bibitem{yamaguchi2015automatic}
F.~Yamaguchi, A.~Maier, H.~Gascon, and K.~Rieck, ``Automatic inference of
  search patterns for taint-style vulnerabilities,'' in \emph{Security and
  Privacy (SP), 2015 IEEE Symposium on}.\hskip 1em plus 0.5em minus 0.4em\relax
  IEEE, 2015, pp. 797--812.

\bibitem{petsios2017nezha}
T.~Petsios, A.~Tang, S.~Stolfo, A.~D. Keromytis, and S.~Jana, ``Nezha:
  Efficient domain-independent differential testing,'' in \emph{Proceedings of
  the 38th IEEE Symposium on Security \& Privacy,(San Jose, CA)}, 2017.

\bibitem{arp2014drebin}
D.~Arp, M.~Spreitzenbarth, M.~Hubner, H.~Gascon, K.~Rieck, and C.~Siemens,
  ``Drebin: Effective and explainable detection of android malware in your
  pocket.'' in \emph{NDSS}, 2014.

\bibitem{neuhaus2007predicting}
S.~Neuhaus, T.~Zimmermann, C.~Holler, and A.~Zeller, ``Predicting vulnerable
  software components,'' in \emph{Proceedings of the 14th ACM conference on
  Computer and communications security}.\hskip 1em plus 0.5em minus 0.4em\relax
  ACM, 2007, pp. 529--540.

\bibitem{neuhaus2009beauty}
S.~Neuhaus and T.~Zimmermann, ``The beauty and the beast: Vulnerabilities in
  red hat's packages.'' in \emph{USENIX Annual Technical Conference}, 2009.

\bibitem{scandariato2014predicting}
R.~Scandariato, J.~Walden, A.~Hovsepyan, and W.~Joosen, ``Predicting vulnerable
  software components via text mining,'' \emph{IEEE Transactions on Software
  Engineering}, vol.~40, no.~10, pp. 993--1006, 2014.

\bibitem{roumani2015time}
Y.~Roumani, J.~K. Nwankpa, and Y.~F. Roumani, ``Time series modeling of
  vulnerabilities,'' \emph{Computers \& Security}, vol.~51, pp. 32--40, 2015.

\bibitem{jimenez2016vulnerability}
M.~Jimenez, M.~Papadakis, and Y.~Le~Traon, ``Vulnerability prediction models: A
  case study on the linux kernel,'' in \emph{Source Code Analysis and
  Manipulation (SCAM), 2016 IEEE 16th International Working Conference
  on}.\hskip 1em plus 0.5em minus 0.4em\relax IEEE, 2016, pp. 1--10.

\bibitem{amoroso1994process}
E.~Amoroso, C.~Taylor, J.~Watson, and J.~Weiss, ``A process-oriented
  methodology for assessing and improving software trustworthiness,'' in
  \emph{Proceedings of the 2nd ACM Conference on Computer and communications
  security}.\hskip 1em plus 0.5em minus 0.4em\relax ACM, 1994, pp. 39--50.

\bibitem{boland2010toward}
T.~Boland, C.~Cleraux, and E.~Fong, ``Toward a preliminary framework for
  assessing the trustworthiness of software,'' \emph{National Institute of
  Standards and Technology (November 2010)}, 2010.

\bibitem{bugiel2011scalable}
S.~Bugiel, L.~V. Davi, and S.~Schulz, ``Scalable trust establishment with
  software reputation,'' in \emph{Proceedings of the sixth ACM workshop on
  Scalable trusted computing}.\hskip 1em plus 0.5em minus 0.4em\relax ACM,
  2011, pp. 15--24.

\bibitem{Ries2009Trust}
S.~Ries, ``Trust in ubiquitous computing,'' Doctoral thesis, Technische
  Universit\"at Darmstadt, 2009.

\bibitem{HabibThesis2014}
S.~M. Habib, ``Trust establishment mechanisms for distributed service
  environments,'' Doctoral thesis, Technische Universit{\"a}t Darmstadt, 2014.

\bibitem{HabibFusionTrust2012}
S.~M. Habib, S.~Ries, S.~Hauke, and M.~M{\"u}hlh{\"a}user, ``Fusion of opinions
  under uncertainty and conflict -- application to trust assessment for cloud
  marketplaces,'' in \emph{Trust, Security and Privacy in Computing and
  Communications (TrustCom), 2012 IEEE 11th International Conference on}, June
  2012, pp. 109 --118.

\bibitem{yao1982protocols}
A.~C. Yao, ``Protocols for secure computations,'' in \emph{Foundations of
  Computer Science, 1982. SFCS'08. 23rd Annual Symposium on}.\hskip 1em plus
  0.5em minus 0.4em\relax IEEE, 1982, pp. 160--164.

\bibitem{alexopoulos2017mcmix}
N.~Alexopoulos, A.~Kiayias, R.~Talviste, and T.~Zacharias, ``{MCMix}: Anonymous
  messaging via secure multiparty computation,'' in \emph{26th {USENIX}
  Security Symposium ({USENIX} Security 17)}.\hskip 1em plus 0.5em minus
  0.4em\relax Vancouver, BC: {USENIX} Association, 2017, pp. 1217--1234.

\bibitem{cheng2007fuzzy}
P.-C. Cheng, P.~Rohatgi, C.~Keser, P.~A. Karger, G.~M. Wagner, and A.~S.
  Reninger, ``Fuzzy multi-level security: An experiment on quantified
  risk-adaptive access control,'' in \emph{Security and Privacy, 2007. SP'07.
  IEEE Symposium on}.\hskip 1em plus 0.5em minus 0.4em\relax IEEE, 2007, pp.
  222--230.

\bibitem{evered2013android}
R.~Evered, S.~Watson, P.~Dockter, and D.~Harkin, ``Android* devices in a byod
  environment,'' \emph{Intel White Paper}, 2013.

\end{thebibliography}
\appendices
\section{CertainLogic standard (logical) operators}\label{ap:clogic_standard}

\medskip
\par\noindent\textbf{CertainLogic $\boldsymbol{AND^{ct}}$ ($\boldsymbol{\land^{ct}}$) Operator:} The operator $\wedge^{ct}$ is applicable when opinions about two independent
propositions need to be combined in order to produce a new opinion reflecting the degree of truth of both propositions
simultaneously. Note that the opinions are represented using the CertainTrust model.
The rationale behind the definitions of the logical operators of CertainTrust (e.g., $AND^{ct}$ ($\land^{ct}$)) demands an analytical discussion. 

In standard binary logic, logical operators operate on propositions that only consider the values `TRUE' or `FALSE'
(i.e., $1$ or $0$ respectively) as input arguments. In standard probabilistic logic, the logical operators operates on propositions
that consider values in the range of $[0,1]$ (i.e., probabilities) as input arguments. However, logical operators in the standard
probabilistic approach are not able to consider \emph{uncertainty} about the probability values. Subjective Logic's~\cite{josang2001logic} logical operators
are able to operate on opinions that consider uncertain probabilities as input arguments. Additionally, Subjective Logic's logical
operators are a generalized version of standard logic operators and probabilistic logic operators.

CertainLogic's logical operators operate on CertainTrust's opinions, which represent uncertain probabilities in a more flexible and simple
manner than the opinion representation in Subjective Logic (SL). Note that CertainTrust's representation and Subjective Logic's
representation of opinions are \emph{isomorphic} with the mapping provided in~\cite{Ries2009Trust}. For a detailed discussion on the
representational model of Subjective Logic's opinions and CertainTrust's opinions, we refer the readers to Chapter 2 of~\cite{HabibThesis2014}.
The definitions of CertainLogic's logical operators are formulated in a way so that they are equivalent to the definitions of logical operators
in Subjective Logic. This equivalence serves as an argument for the \emph{justification} and \emph{mathematical validity} of CertainLogic
logical operators' definitions. Moreover, these operators are a generalization of binary logic and probabilistic logic operators. 




\begin{definition}[Operator $AND^{ct}$]\label{Defn:CT_AND}
	Let $A$ and $B$ be two independent propositions and the opinions about the truth of these propositions be given as $o_A=(t_A,c_A,f_A)$ and $o_B=(t_B,c_B,f_B)$, respectively. Then, the resulting opinion is denoted as $o_{A \land^{ct} B}=(t_{A \land^{ct} B},c_{A \land^{ct} B},f_{A \land^{ct} B})$ where $t_{A \land^{ct} B}$, $c_{A \land^{ct} B}$, and $f_{A \land^{ct} B}$ are defined in Table~\ref{tab:operators} ($AND$). We use the symbol '$\land^{ct}$' to designate the operator $AND^{ct}$ and we define $o_{A \land^{ct} B} \equiv o_A \land^{ct} o_B$.
\end{definition}

The aggregation (using the $AND^{ct}$ operator) of opinions about independent propositions $A$ and $B$ are formulated in a way that the resulting
initial expectation ($f$) is dependent on the initial expectation values, $f_{A}$ and $f_{B}$ assigned to $A$ and $B$ respectively.
Following the equivalent definitions of Subjective Logic's normal conjunction operator and basic characteristics of the same operator ($\land$)
in standard probabilistic logic, we define $f_{A \land^{ct} B}=f_A f_B$. The definitions for $c_{A \land^{ct} B}$ and $t_{A \land^{ct} B}$ are formulated in
similar manner and the corresponding adjustments in the definitions are made to maintain the equivalence between the operators of Subjective Logic
and CertainLogic. The $AND^{ct}$ ($\land^{ct}$) operator of CertainLogic is associative and commutative; both properties are desirable for the evaluation
of propositional logic terms (PLTs).

\begin{table*}[!tb]
	\caption{Definition of the operator $AND^{ct}$ ($\land^{ct}$)}
\label{tab:operators}
\scriptsize
\centering
\begin{tabular}
{|c|}
\hline
\\
$AND^{ct}$\tabularnewline 
\hline
\\
$
\begin{aligned}

	c_{A \land^{ct} B}=&
\begin{cases}
c_A+ c_B-c_A c_B-\frac{\left(1-c_A\right) c_B  \left(1-f_A\right) t_{B}+c_A  \left(1-c_{B}\right) \left(1-f_B\right) t_A}{1-f_A f_B}&\text{if }f_A f_B \neq 1,\\
\text{``undefined''}~\text{else}.\\
\end{cases}\\
t_{A \land^{ct} B}=&
\begin{cases}
	\frac{1}{c_{A \land^{ct} B}} \left(c_A c_B t_A t_B+\frac{c_A(1-c_B) (1-f_A)f_B t_A+ (1-c_A)c_B f_A(1-f_B)t_B}{1-f_A f_B}\right)~\text{if }c_{A \land^{ct} B} \neq 0 \text{ and } f_A f_B \neq 1,\\
        0.5~\text{else}.\\
\end{cases}\\
f_{A \land^{ct} B}=&f_A f_B\\
\end{aligned}
$
\tabularnewline
\\
\hline
\end{tabular}
\end{table*}

\medskip
\par\noindent\textbf{CertainLogic $\boldsymbol{OR^{ct}}$ ($\boldsymbol{\lor^{ct}}$) Operator:}
The operator $\lor^{ct}$ is applicable when opinions about two independent propositions need to form a new opinion reflecting the degree
of truth for at least one out of two propositions. 

\begin{definition}[Operator $OR^{ct}$]\label{Defn:CT_OR}~
Let $A$ and $B$ be two independent propositions and the opinions about the truth of these
propositions be given as $o_A=(t_A,c_A,f_A)$ and $o_B=(t_B,c_B,f_B)$, respectively. Then,
the resulting opinion is denoted as $o_{A \lor^{ct} B}=(t_{A \lor^{ct} B},c_{A \lor^{ct} B},f_{A \lor^{ct} B})$
where $t_{A \lor^{ct} B}$, $c_{A \lor^{ct} B}$, and $f_{A \lor^{ct} B}$ are defined in Table~\ref{tab:operators}
($OR^{ct}$). We use the symbol '$\lor^{ct}$' to designate the operator $OR^{ct}$ and we define
$o_{A \lor^{ct} B} \equiv o_A \lor^{ct} o_B$.
\end{definition}

The aggregation (using the $OR^{ct}$ operator) of opinions about independent propositions $A$ and $B$
is formulated in a way that the resulting initial expectation ($f$) is dependent on the initial
expectation values, $f_{A}$ and $f_{B}$ assigned to $A$ and $B$ respectively. Following the equivalent
definitions of Subjective Logic's normal disjunction operator and the basic characteristics of the
same operator ($\lor$) in standard probabilistic logic, we define $f_{A \lor^{ct} B}=f_A+f_B-f_A f_B$.
The definitions for $c_{A \lor^{ct} B}$ and $t_{A \lor^{ct} B}$ are formulated in similar manner and the
corresponding adjustments in the definitions are made to maintain the equivalence between the operators
of Subjective Logic and CertainLogic. The $OR^{ct}$ ($\lor^{ct}$) operator of CertainLogic is associative and commutative; both properties are desirable for the evaluation
of PLTs.

\begin{table*}[!tb]
	\caption{Definition of the operator $OR^{ct}$ ($\lor^{ct}$)}
\label{tab:operators2}
\scriptsize
\centering
\begin{tabular}
{|c|}
\hline
\\
$OR^{ct}$\tabularnewline 
\hline\\
$
\begin{aligned}

	c_{A \lor^{ct} B}=&
\begin{cases}
c_A+c_B-c_A c_B-\frac{c_A (1-c_B)f_B(1- t_A)+(1-c_A) c_B f_A (1-t_B)}{f_A+f_B-f_A f_B}&\text{if }f_A f_B \neq 0,\\
\text{``undefined''}~\text{else}.\\
\end{cases}\\
t_{A \lor^{ct} B}=&
\begin{cases} 
	\frac{1}{c_{A \lor^{ct} B}}\left(c_A t_A + c_B t_B - c_A c_B t_A t_B \right)~\text{if }c_{A \lor^{ct} B} \neq 0,\\
        0.5~\text{else}.\\
\end{cases}\\
f_{A \lor^{ct} B}=&f_A+f_B-f_A f_B\\
\end{aligned}
$
\tabularnewline
\\
\hline
\end{tabular}
\end{table*}

\section{CertainLogic non-standard (FUSION) operators}
\label{ap:fusion}
Assume that one wants to fuse conflicting opinions (about a proposition) derived from multiple sources. In this case, one should use the conflict-aware fusion ($C.FUSION$) operator as defined in~\cite{HabibThesis2014}. This operator operates on dependent conflicting opinions and reflects the calculated degree of conflict ($DoC$) in the resulting fused opinion. Note that the $C.FUSION$ operator is also able to deal with preferential weights associated with opinions. 

\begin{definition}[$C.FUSION$]\label{Defn:CT_Fusion_conflict_n_ary}

Let $A$ be a proposition and let $o_{A_1} = (t_{A_1},c_{A_1},f_{A_1})$, $o_{A_2} = (t_{A_2},c_{A_2},f_{A_2})$,$\dotsm$, $o_{A_n} = (t_{A_n},c_{A_n},f_{A_n}$)  be $n$ opinions associated to $A$. Furthermore, the weights $w_1$, $w_2$,$\dotsm$, $w_n$ (with $w_1, w_2,\dotsm, w_n \in \mathbb{R}_{0}^{+}$ and $w_1+ w_2+\dotsm+w_n\neq0$) are assigned to the opinions $o_{A_1}$, $o_{A_2}$,$\dotsm$, $o_{A_n}$, respectively.
\smallskip The \textbf{conflict-aware fusion} is  denoted as $$o_{ \FUSIONC (A_{1},\dotsm, A_{n})}=$$ $$=((t_{\FUSIONC (A_{1},\dotsm, A_{n})},c_{\FUSIONC (A_{1},\dotsm, A_{n})},f_{\FUSIONC (A_{1},\dotsm, A_{n})}),DoC )$$ where $t_{\FUSIONC (A_{1},\dotsm, A_{n})}$, $c_{\FUSIONC (A_{1},\dotsm, A_{n})}$, $f_{\FUSIONC (A_{1},\dotsm, A_{n})}$,\smallskip\\ and the degree of conflict $DoC$ are defined in Table~\ref{tab:operators_conflict}. We use the symbol ($\FUSIONC$) to designate the operator $C.FUSION$ and we define: $$o_{\FUSIONC (A_{1},\dotsm, A_{n})} \equiv \FUSIONC ((o_{A_1},w_1),(o_{A_2},w_2),\dotsm, (o_{A_n},w_n))$$
\end{definition}

\begin{table*}[!tb]
\caption{Definition of the Conflict-aware Fusion Operator}
\label{tab:operators_conflict}
\scriptsize
\centering
\begin{tabular}
{|c|}
\hline
$
\begin{aligned}
t_{\FUSIONC (A_{1}, A_{2},\dotsm, A_{n})}=&
\begin{cases}
                \frac{\displaystyle\sum^n_{i=1} w_i t_{A_i}}{\displaystyle\sum^n_{i=1} w_i}~\text{if }c_{A_1} = c_{A_2} =\dotsm=c_{A_n}= 1 \enspace,\\
    0.5~\text{if }c_{A_1} = c_{A_2} =\dotsm=c_{A_n}= 0 \enspace,\\
                \frac{\displaystyle\sum^n_{i=1}(c_{A_i} t_{A_i} w_i \displaystyle\prod^{\substack{n}}_{\substack{j=1,~j\neq i}} (1-c_{A_j}))}{\displaystyle\sum^n_{i=1} (c_{A_i} w_i \displaystyle\prod^{\substack{n}}_{\substack{j=1,~j\neq i}} (1-c_{A_j}))}~\text{if }\{c_{A_i}, c_{A_j}\}\neq 1 \enspace.\\
\end{cases}
\\
\\

c_{\FUSIONC (A_{1}, A_{2},\dotsm, A_{n})}=&
\begin{cases}
                        1*(1-DoC)~\text{if }c_{A_1} = c_{A_2}=\dotsm=c_{A_n}= 1 \enspace,\\
      \frac{\displaystyle\sum^n_{i=1}(c_{A_i} w_i \displaystyle\prod^{\substack{n}}_{\substack{j=i+1}} (1-c_{A_j}))}{\displaystyle\sum^n_{i=1} (w_i \displaystyle\prod^{\substack{n}}_{\substack{j=1,~j\neq i}} (1-c_{A_j}))}*(1-DoC)~\text{if }\{c_{A_i}, c_{A_j}\}\neq 1  \enspace.\\
\end{cases}
\\
\\
f_{\FUSIONC (A_{1}, A_{2},\dotsm, A_{n})}=&\frac{\displaystyle\sum^n_{i=1} w_i f_{A_i}}{\displaystyle\sum^n_{i=1} w_i}
\\
\\
DoC=&\frac{{\displaystyle\sum^{\substack{n}}_{\substack{i=1}} \displaystyle\sum^{\substack{n}}_{\substack{j=1,j\neq i}} DoC_{A_i, A_j}}}{\frac{n(n-1)}{2}}
\\
\\
DoC_{A_i, A_j}=&\left\lvert{t_{A_i}-t_{A_j}}\right\rvert*c_{A_i}*c_{A_j}*\left(1-\left\lvert\frac{w_i-w_j}{w_i+w_j}\right\rvert\right)\\
\end{aligned}
$
\tabularnewline
\hline
\end{tabular}
\end{table*}

The conflict-aware fusion ($C.FUSION$) operator is commutative and idempotent, but not associative.

The rationale behind the definition of the \emph{conflict-aware} fusion demands an extensive discussion. The basic concept of this operator is that the operator extends CertainLogic's \emph{Weighted fusion}~\cite{HabibFusionTrust2012} operator by calculating the degree of conflict ($DoC$) between a pair of opinions. Then, the value of $(1-DoC)$ is multiplied with the certainty ($c$) that would be calculated by the weighted fusion (the parameters for $t$ and $f$ are the same as in the weighted fusion).

Now, we discuss the calculation of the $DoC$ for two opinions. For the parameter, it holds $DoC \in [0,1]$. This parameter depends on the trust value ($t$), the certainty values ($c$), and the weights ($w$). The weights are assumed to be selected by the trustors (consumers)  and the purpose of the weights is to model the preferences of the trustor when aggregating opinions from different sources. We assume that the compliance of their preferences are ensured under a policy negotiation phase. For example, users might be given three choices, High ($2$), Low ($1$) and No preference ($0$, i.e., opinion from a particular source is not considered), to express their preferences on selecting the sources that provide the opinions. Note that the weights are not introduced to model the reliability of sources. In this case, it would be appropriate to use the discounting operator~\cite{Ries2009Trust,josang2001logic} to explicitly consider reliability of sources and apply the fusion operator on the results to influence users' preferences. The values of $DoC$ can be interpreted as follows: 
\begin{itemize}
\item \textbf{No conflict ($DoC=0$):} For $DoC=0$, it holds that there is \emph{no conflict} between the two opinions. This is true if both opinions agree on the trust value, i.e., $t_{A_1}=t_{A_2}$ or in case that at least one opinion has a certainty $c=0$ (for completeness we have to state that it is also true if one of the weights is equal to $0$, which means the opinion is not considered).
\item \textbf{Total conflict ($DoC=1$):} For $DoC=1$, it holds that the two opinions are weighted equally ($w_1=w_2$) and contradicts each other to a maximum. This means, that both opinions have a maximum certainty ($c_{A_1}=c_{A_2}=1$) and maximum divergence in the trust values, i.e., $t_ {A_1}=0$ and $t_ {A_2}=1$ (or $t_ {A_1}=1$ and $t_ {A_2}=0$).
\item \textbf{Conflict ($DoC\in]0,1[$):} For $DoC\in]0,1[$, it holds that there are two opinions contradict each other to a certain degree. This means that the both opinions does not agree on the trust values, i.e., $t_{A_1}\neq t_{A_2}$, having certainty values other than $0$ and $1$. The weights can be any real number other than $0$.
\end{itemize}

Next, we argue for integrating the degree of conflict, $DoC$, into the resulting opinion by multiplying the certainty with $(1-DoC)$. The argument is, in case that there are two (equally weighted) conflicting opinions, then this indicates that the information which these opinions are based on is not representative for the outcome of the assessment or experiment. Thus, for the sake of representativeness, in the case of total conflict (i.e., $DoC=1$), we reduce the certainty ($c_{(o_{A_1},w_1) \FUSION (o_{A_2},w_2)}$) of the resulting opinion by a multiplicative factor, $(1-DoC)$. The certainty value is $0$ in this case. 

For $n$ opinions, degree of conflict (i.e., $DoC_{A_i,A_j}$) in Table~\ref{tab:operators_conflict} is calculated for each opinion pairs. For instance, if there are $n$ opinions there can be at most $\frac{n(n-1)}{2}$ pairs and degree of conflict is calculated for each of those pairs individually. Then, all the pair-wise $DoC$ values are averaged, i.e., averaging $\frac{n(n-1)}{2}$ pairs of $DoC_{A_i,A_j}$. Finally, the certainty (i.e., $c_{\FUSIONC (A_{1}, A_{2},\dotsm, A_{n})}$) parameter of the resulting opinion (see Table~\ref{tab:operators_conflict}) is adjusted with the resulting $DoC$ value.

In Table~\ref{tab:operators_conflict}, for all opinions if it holds $c_{A_i}=0$ (complete uncertainty), the expectation values depends only on $f$. However, for soundness we define $t_{A_i}=0.5$ in this case.

\section{Deployment scenarios}
After describing M-STAR in the previous sections, we proceed with
proposing and discussing two possible deployment scenarios for our
system.
\medskip
\par\noindent\textbf{Software installation process:}
Our first scenario is using M-STAR's assessments to assist a system administrator understand
and minimize the security risks of their system configuration. For example, more risky components
can be substituted by more trustworthy ones with equivalent functionality. Figure~\ref{fig:example}
shows the trustworthiness assessment of a Debian system, the components of which are considered
equally critical to the security of the system. This is due to the fact that vulnerabilities of
any of the components can lead to an adversary taking control of the system.

To fully exploit the rich expressiveness of M-STAR's trust modeling and the logical operators
in our disposal (see Section~\ref{sec:combining}), we can apply M-STAR to the problem of risk assessment
and minimization, in the context of \ac{SMC}. \ac{SMC} was initially introduced by Yao~\cite{yao1982protocols}
for the two-party setting and soon extended to the multi-party setting. The basic idea of \ac{SMC} in the client-server
model, is that $m$ servers can run an algorithm with input data provided by the clients in secret-shared
format and generate an output without learning anything about the input data of the clients, assuming $k-out-of-m$
of the servers are honest. Although \ac{SMC} remained a theoretical construct for many years, recent advances have
allowed the realization of the concept, as seen e.g. in the use-case of anonymous messaging~\cite{alexopoulos2017mcmix}.
One of the main practical issues of \ac{SMC} is that the software configurations of the servers is critical to
guaranteeing that a vulnerability of a software component cannot affect more than $k$ servers, thus breaking the
security of the whole system. As we showcased in Section~\ref{sec:combining} for the case of $1-out-of-2$ servers,
systems administrators can use M-STAR to maximize the overall trustworthiness of multi-server systems by
adapting their software configurations.
\medskip
\par\noindent\textbf{Trust-based access control:}
Traditional access control models are based solely on cryptographic credentials. However, assessing only the
identity of the party requesting access to potentially valuable information overlooks the possibility that
the device used by the otherwise honest party is not trustworthy. Consequently, using trust and risk in
access control policies was proposed, e.g. in~\cite{cheng2007fuzzy} and more recently in the case of Intel in~\cite{evered2013android}.
M-STAR trust scores can be readily used in access control models of this kind, providing a well-founded and probabilistic
measure of trustworthiness w.r.t. the software configuration of the party requesting access.

\section{Additional Figures}
\begin{figure}[h]
\centering
\includegraphics[angle=270, scale=0.53]{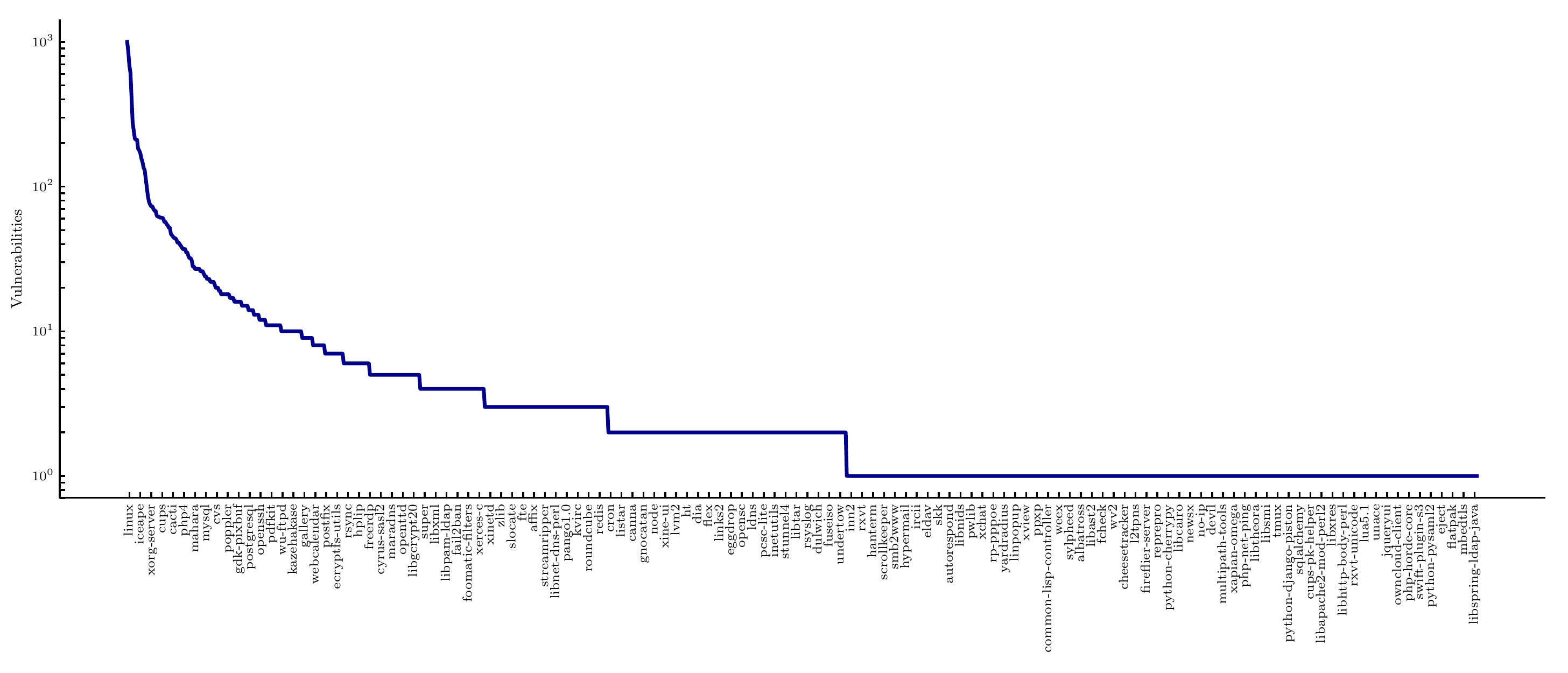}
\caption{The distribution of vulnerabilities in the Debian ecosystem (years 2001-2016). The scale of axis y is logarithmic. All packages are taken into account. Every tenth package name appears on the x axis for space reasons.}
\label{fig:distr}
\end{figure}

\end{document}